\newcommand{\E}{\mathbb E}
\newcommand{\N}{\mathbb N}
\newcommand{\R}{\mathbb R}
\renewcommand{\P}{\mathbb P}
\newtheorem{theorem}{Theorem}[section]
\newtheorem{proposition}[theorem]{Proposition}
\newtheorem{assumption}[theorem]{Assumption}
\theoremstyle{definition}
\theoremstyle{remark}
\newtheorem{example}[theorem]{Example}
\newtheorem{remark}[theorem]{Remark}
\newcommand{\argmin}{\operatorname{argmin}}
\begin{document}
\begin{frontmatter}

\title{The Zig-Zag Process and Super-Efficient Sampling for Bayesian
Analysis of Big Data}
\runtitle{Zig-Zag Sampling}

\begin{aug}
  \author{\fnms{Joris}  \snm{Bierkens}\corref{}\thanksref{t1}\ead[label=e1]{joris.bierkens@tudelft.nl}},
  \author{\fnms{Paul} \snm{Fearnhead}\thanksref{t1}\ead[label=e2]{p.fearnhead@lancs.ac.uk}}
  \and
  \author{\fnms{Gareth}  \snm{Roberts}\thanksref{t1}\ead[label=e3]{gareth.o.roberts@warwick.ac.uk}}%

  \thankstext{t1}{The authors acknowledge the EPSRC for support under grants EP/D002060/1 (CRiSM) and EP/K014463/1 (iLike).}

  \runauthor{J. Bierkens, P. Fearnhead and G. O. Roberts}

  \affiliation{Delft University of Technology, Lancaster University and University of Warwick}

  \address{Delft Institute of Applied Mathematics, \\
  Van Mourik Broekmanweg 6, \\
  2628 XE Delft, \\
  Netherlands, \\ \printead{e1}}
  
\address{Department of Mathematics and Statistics,\\ 
	    Fylde College, \\
Lancaster University, \\
Lancaster, LA1 4YF, \\
United Kingdom, \\ \printead{e2}
}
  \address{Department of Statistics,\\ 
  University of Warwick, \\
  Coventry CV4 7AL, \\
  United Kingdom, \\ \printead{e3}
      }

\end{aug}

%
%
%
%
%

\begin{abstract} ~ 
Standard MCMC methods can scale poorly to big data settings due to the need to evaluate the likelihood at each iteration. There have been a number of approximate MCMC algorithms that use sub-sampling ideas to reduce this computational burden, but with the drawback that these algorithms no longer target the true posterior distribution. We introduce a new family of Monte Carlo methods based upon a multi-dimensional version of the Zig-Zag process of \cite{BierkensRoberts2015}, a continuous time piecewise deterministic Markov process. While traditional MCMC methods are reversible by construction (a property which is known to inhibit rapid convergence) the Zig-Zag process offers a flexible non-reversible alternative which we observe to often have favourable convergence properties. We show how the Zig-Zag process can be simulated without discretisation error, and give conditions for the process to be ergodic. Most importantly, we introduce a sub-sampling version of the Zig-Zag process that is an example of an {\em exact approximate scheme}, i.e. the resulting approximate process still has the posterior as its stationary distribution. Furthermore, if we use a control-variate idea to reduce the variance of our unbiased estimator, then the Zig-Zag process can be super-efficient: after an initial pre-processing step, essentially independent samples from the posterior distribution are obtained at a computational cost which does not depend on the size of the data. 
\end{abstract}

\begin{keyword}[class=MSC]
\kwd[Primary ]{65C60}; \ 
\kwd[secondary ]{65C05} 
\kwd{62F15} 
\kwd{60J25}  
\end{keyword}

\begin{keyword}
\kwd{MCMC}
\kwd{non-reversible Markov process}
\kwd{piecewise deterministic Markov process}
\kwd{Stochastic Gradient Langevin Dynamics}
\kwd{sub-sampling}
\kwd{exact sampling}
\end{keyword}

\end{frontmatter}





\section{Introduction}

The importance of Markov chain Monte Carlo techniques in Bayesian inference shows no signs of diminishing. However, all
commonly used methods are variants on the Metropolis-Hastings (MH) 
algorithm  \cite[]{Metropolis1953,Hastings1970} and rely on innovations which date back over 60 years.
All MH algorithms simulate realisations from a discrete reversible ergodic Markov chain with invariant distribution $\pi$
which is (or is closely related to) the {\em target} distribution, i.e. the posterior distribution in a Bayesian context. The MH algorithm gives a beautifully simple though flexible recipe for constructing such Markov chains, requiring only local information about $\pi$ (typically pointwise evaluations of $\pi$ and, perhaps, its derivative at the current and proposed new locations) to complete each iteration.

However new complex modelling and data paradigms are seriously challenging these established methodologies. Firstly, the restriction of traditional
MCMC to reversible Markov chains is a serious limitation. It is now well-understood both theoretically 
\cite[]{Hwang1993,Chen2013,ReyBelletSpiliopoulos2015,Bierkens2015,DuncanLelievrePavliotis2015} and heuristically \cite[]{neal1998suppressing} 
that non-reversible chains offer potentially massive advantages over reversible counterparts. The need to escape reversibility, and create momentum to aid mixing throughout the state space is certainly well-known, 
and motivates a number of 
modern MCMC methods, including the popular Hamiltonian MCMC \cite[]{Duane1987}. 

A second major obstacle to the application of MCMC for Bayesian inference is the need to process potentially massive data-sets. 
Since MH algorithms in their pure form require a likelihood evaluation -- and thus processing the full data-set -- at every iteration, 
it can be impractical to carry out large numbers of MH iterations. 
This has led to a range of alternatives that use sub-samples of the data at each iteration 
\cite[]{WellingTeh2011,maclaurin2014firefly,Ma:2015,Quiroz:2015}, or that partition 
the data into shards, run MCMC on each shard, and then attempt to combine the information from these different MCMC runs \cite[]{Neiswanger:2013,Scott:2013,Wang:2013,li2017simple}. However
most of these methods introduce some form of approximation error, so that the final sample will be drawn from some approximation to the posterior, and the quality of the approximation can be impossible to evaluate. 
As an exception the Firefly algorithm \cite[]{maclaurin2014firefly} samples from the exact distribution of interest (but see the comment below).

This paper introduces the multi-dimensional Zig-Zag sampling algorithm (ZZ) and its variants. 
These methods overcome the restrictions of the lifted Markov chain approach of \cite{TuritsynChertkovVucelja2011} as they do not depend on the introduction of momentum generating quantities.
They are also amenable to the use of sub-sampling ideas. The dynamics of the Zig-Zag process depends on the target distribution through the gradient of the logarithm of the target. For Bayesian applications
this is a sum, and is easy to estimate unbiasedly using sub-sampling. 
Moreover, Zig-Zag with Sub-Sampling (ZZ-SS) retains the exactness of the required invariant distribution.
Furthermore, if we also use control variate ideas to reduce the variance of our sub-sampling estimator of the gradient, the resulting Zig-Zag with Control Variates (ZZ-CV) algorithm
has remarkable {\em super-efficient} scaling properties for large data sets.

We will call an algorithm \emph{super-efficient} if it is able to generate independent samples from the target distribution at a higher efficiency than if we would draw independently from the target distribution at 
the cost of evaluating all data. The only situation we are aware of where we can implement super-efficient sampling is with simple conjugate models, where the likelihood function has a low-dimensional
summary statistic which can be evaluated at cost $O(n)$, where $n$ is the number of observations, after which we can obtain independent samples from the posterior distribution at a cost of $O(1)$ by using the functional form of the 
posterior distribution.
The ZZ-CV can replicate this computational efficiency: after a pre-computation of $O(n)$, we are able to obtain independent samples at a cost of $O(1)$. In this sense it contrasts with the Firefly algorithm \cite[]{maclaurin2014firefly} which has an ESS per datum which decreases approximately as $1/n$ where $n$ is the size of the data, so that the gains of this algorithm do not increase with $n$; see \cite[Section 4.6]{Bouchard-Cote2015}.

This breakthrough is based upon the Zig-Zag process, a continuous time piecewise deterministic Markov process (PDMP).
Given a $d$-dimensional differentiable target density $\pi$, Zig-Zag is a continuous-time non-reversible stochastic process with continuous, piecewise linear trajectories on $\R^d$. It moves with constant velocity, $\Theta \in \{-1, 1\}^d$, until 
one of the velocity components switches sign. The event time and choice of which direction to reverse is controlled by a collection of state-dependent
switching rates, $(\lambda_i)_{i=1}^d$ which in turn are constrained via an identity \eqref{eq:relation_lambda} which ensures that $\pi$ is a stationary distribution for the process. The process intrinsically is 
constructed in continuous-time, and it can be easily simulated using standard Poisson thinning arguments as we shall see in Section \ref{sec:sim}.

The use of PDMPs such as the Zig-Zag processes is an exciting and mostly unexplored area in MCMC. The first occurrence of a PDMP for sampling purposes is in the computational physics literature 
\cite[]{PetersDeWith2012}, which in one dimension coincides with the Zig-Zag process. In \cite{Bouchard-Cote2015} this method is given the name \emph{Bouncy Particle Sampler}. In multiple dimensions the Zig-Zag process and Bouncy Particle Sampler (BPS) are different processes: both are PDMPs which move along straight line segments, but the Zig-Zag process changes direction in only a single component at each switch, whereas the Bouncy Particle Sampler reflects the full direction vector in the level curves of the density function. As we will see in Section~\ref{sec:ergodicity}, this difference has a beneficial effect on the ergodic properties of the Zig-Zag process. The one-dimensional Zig-Zag process is analysed in detail in e.g. \cite{Fontbona2012, Monmarche2014b, Fontbona2015, BierkensRoberts2015}. 

Since the first version of this paper was conceived already several other related theoretical and methodological papers have appeared. In particular we mention here results on exponential ergodicity of the BPS \cite[]{Deligiannidis2017} and ergodicity of the multi-dimensional Zig-Zag process \cite[]{Bierkens2017}. The Zig-Zag process has the advantage that it is ergodic under very mild conditions, which in particular means that we are not required to choose a refreshment rate. At the same time, the BPS seems more `natural', in that it tries to minimise the bounce rate and the change in direction at bounces, and it may be more efficient for this reason. However it is a challenge to make a direct comparison in efficiency of the two methods since the efficiency depends both on the computational effort per unit of continuous time of the respective algorithms, as well as the mixing time of the underlying processes. Therefore we expect analysing the relative efficiency of PDMP based  algorithms to be an important area of continued research for years to come.

A continuous-time sequential Monte Carlo algorithm for scalable Bayesian inference with big data, the SCALE algorithm, is given in \cite{Pollock2016}. 
Advantages that Zig-Zag has over SCALE is that it avoids the issue of controlling the stability of importance weights, and it is simpler to implement. Whereas the SCALE algorithm is well-adapted for the use of 
parallel architecture computing, and has particularly simple scaling properties for big data. 


\subsection{Notation}

For a topological space $X$ let $\mathcal B(X)$ denote the Borel $\sigma$-algebra.
We write $\R_+ := [0, \infty)$. 
If $h : \R^d \rightarrow \R$ is differentiable then $\partial_i h$ denotes the function $\xi \mapsto \frac{\partial h(\xi)}{\partial \xi_i}$. 
We equip $E := \R^d \times \{-1,+1\}^d$ with the product topology of the Euclidean topology on $\R^d$ and the discrete topology on $\{-1,+1\}^d$. Elements in $E$ will often be denoted by $(\xi, \theta)$ with $\xi \in \R^d$ and $\theta \in \{-1,+1\}^d$.  For $g : E \rightarrow \R$ differentiable in its first argument we will use  $\partial_{i} g$ to denote the function $(\xi, \theta) \mapsto \frac{\partial g(\xi,\theta)}{\partial \xi_i}$, $i=1,\dots, d$.

\section{The Zig-Zag process}
\label{sec:ZZ}


The Zig-Zag process is a continuous time Markov process whose trajectories lie in the space $E = \R^d \times \{-1,+1\}^d$ and will be denoted by $(\Xi(t), \Theta(t))_{t \geq 0}$. They can be described as follows: at random times a single component of $\Theta(t)$ flips. In between these switches, $\Xi(t)$ is linear with $\frac{d}{dt} \Xi(t) = \Theta(t)$. The rates at which the flips in $\Theta(t)$ occur are time inhomogeneous: the $i$-th component of $\Theta$ switches at rate $\lambda_i(\Xi(t), \Theta(t))$, where $\lambda_i : E \rightarrow \R_+$ for $i =1, \dots, d$ are continuous functions.

\subsection{Construction}
\label{sec:construction}

For a given $(\xi,\theta) \in E$, we may construct a trajectory of $(\Xi,\Theta)$ of the Zig-Zag process with initial condition $(\xi, \theta)$ as follows.

\begin{itemize}
 \item Let $(T^0, \Xi^0, \Theta^0) := (0, \xi,\theta)$.
 \item For $k = 1, 2, \dots$ 
 \begin{itemize}
 \item Let $\xi^k(t) := \Xi^{k-1} + \Theta^{k-1} t$, $t \geq 0$
 \item For $i = 1, \dots, d$, let $\tau^k_i$ be distributed according to
 \[ \P(\tau^k_i \geq t) = \exp \left( - \int_0^t \lambda_i(\xi^k(s), \Theta^{k-1}) \ d s \right).\]
 \item Let $i_0 := \argmin_{i \in \{1, \dots, d \}} \tau^k_i$ and let $T^k := T^{k-1}+\tau^k_{i_0}$.
 \item Let $\Xi^k :=\xi^k(T^k)$.
 \item Let 
 \[ \Theta^k(i) := \left\{ \begin{array}{ll} \Theta^{k-1}(i) \quad & \mbox{if $i \neq i_0$}, \\
                            - \Theta^{k-1}(i) \quad & \mbox{if $i = i_0$}
                           \end{array} \right.\]
\end{itemize}
\end{itemize}
This procedure defines a sequence of \emph{skeleton points} $(T^k, \Xi^k, \Theta^k)_{k = 0}^{\infty}$ in $\R_+ \times E$, which correspond to the time and position at which the direction of the process changes.
The trajectory $\xi^k(t)$ represents the position of the process at time $T^{k-1}+t$ 
until time $T^k$, for $0\le t\le T^k-T^{k-1}$.
The time until the next skeleton event is characterized as the smallest time 
of a set of events in $d$ simultaneous point processes, where each point process corresponds to  switching of a different 
component of the velocity. For the $i$-th of these processes, events occur at rate $\lambda_i(\xi^k(s), \Theta^{k-1})$, and $\tau_i^k$ is defined to be the time to the first event for the $i$-th component. The component for which the earliest event occurs is  $i_0$. This defines $\tau_{i_0}^k$, the time between the $(k-1)$th and $k$th skeleton point, and the component, $i_0$, of the velocity that switches.

The piecewise deterministic trajectories $(\Xi(t), \Theta(t))$ are now obtained as 
\[ (\Xi(t),\Theta(t)) := (\Xi^k + \Theta^k (t- T^k), \Theta^k) \quad \mbox{for $t \in [T^k, T^{k+1})$}, \quad k = 0, 1, 2, \dots. \]

Since the switching rates are continuous and hence bounded on compact sets, and $\Xi$ will travel a finite distance within any finite time interval, within any bounded time interval there will be finitely many switches almost surely.

The above procedure provides a mathematical construction of a Markov process 
as well as the outline of an algorithm which simulates this process. The only step in this procedure which presents a computational challenge is the simulation of the random times $(T_i^k)$ and a significant part of this paper will consider obtaining these in a numerically efficient way.

Figure~\ref{fig:basic-examples} displays trajectories of the Zig-Zag process for several examples of invariant distributions. The name of the process is derived by the 
{\em zig-zag} nature of paths that the process produces.
Figure~\ref{fig:basic-examples} shows an important difference in the output of the Zig-Zag process, as compared to a discrete-time MCMC algorithm: the output of is a continuous-time sample path.  
The bottom row of plots in Figure~\ref{fig:basic-examples}  also gives a comparison to a reversible MCMC algorithm, Metropolis Adjusted Langevin \cite[MALA][]{RobertsTweedie1996}, and demonstrates an advantage of a non-reversible sampler:
it can cope better with a heavy tailed target. This is most easily seen if we start the process out in the tail, as in the figure.
The velocity component of the Zig-Zag process enables it to quickly return to the mode of the distribution, whereas the reversible algorithm behaves like a random walk in the tails, and takes much longer
to return to the mode.



\begin{figure}
\begin{subfigure}[b]{0.45 \textwidth}
 \includegraphics[width=\textwidth]{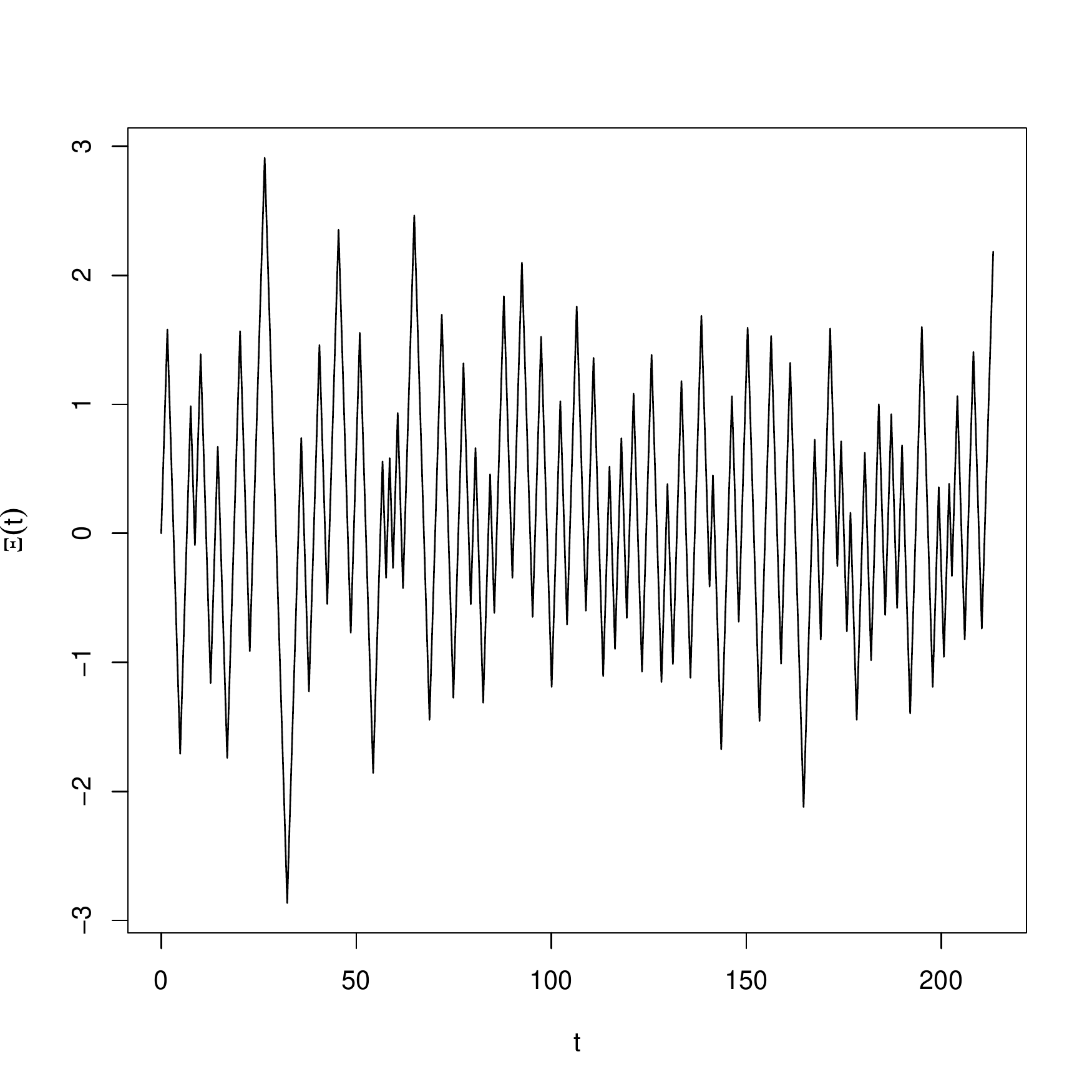}
 \caption{1D Gaussian}
\end{subfigure}
\begin{subfigure}[b]{0.45 \textwidth}
 \includegraphics[width=\textwidth]{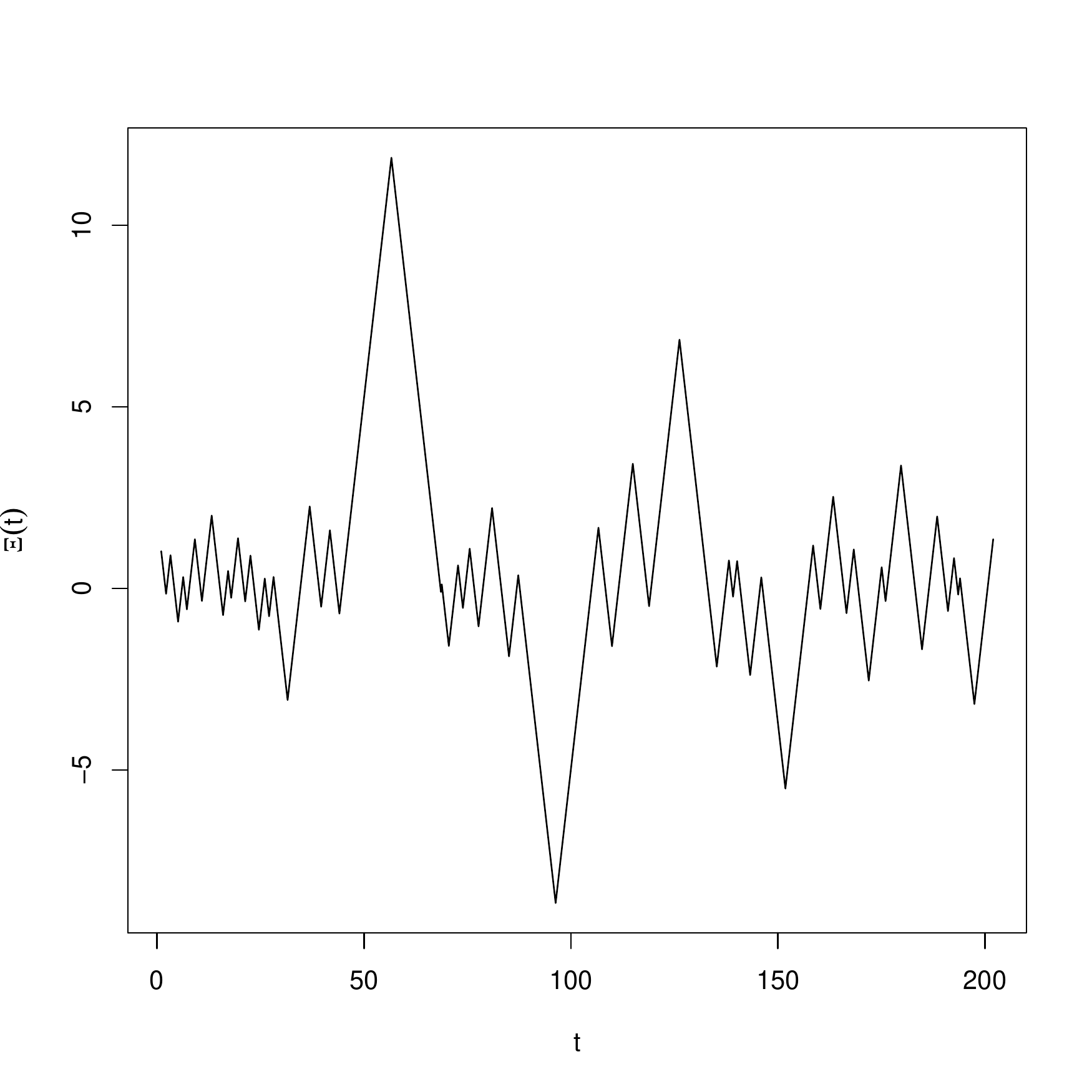}
 \caption{1D Cauchy}
\end{subfigure}

\begin{subfigure}[b]{0.45 \textwidth}
 \includegraphics[width=\textwidth]{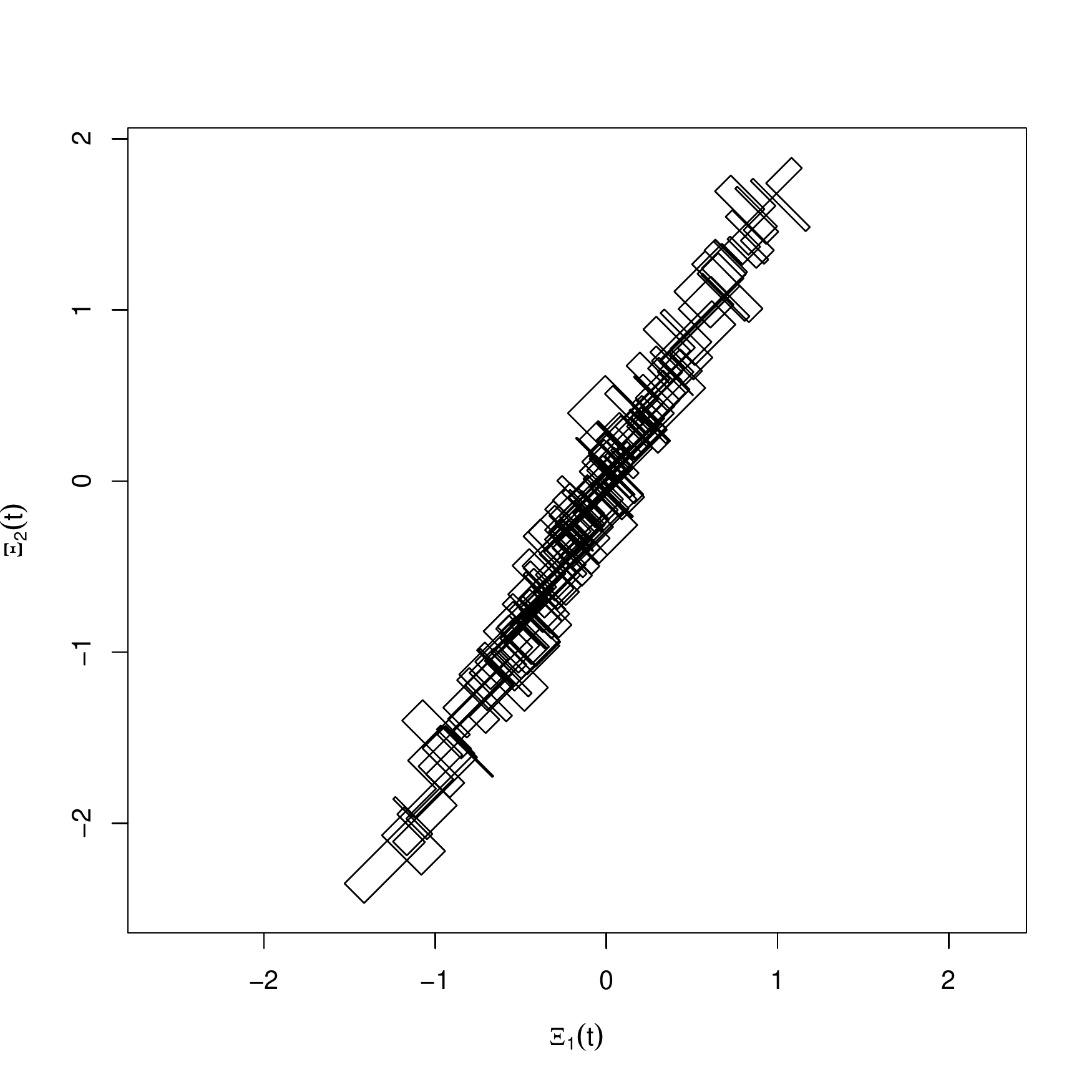}
 \caption{2D anisotropic Gaussian}
\end{subfigure}
\begin{subfigure}[b]{0.45 \textwidth}
 \includegraphics[width=\textwidth]{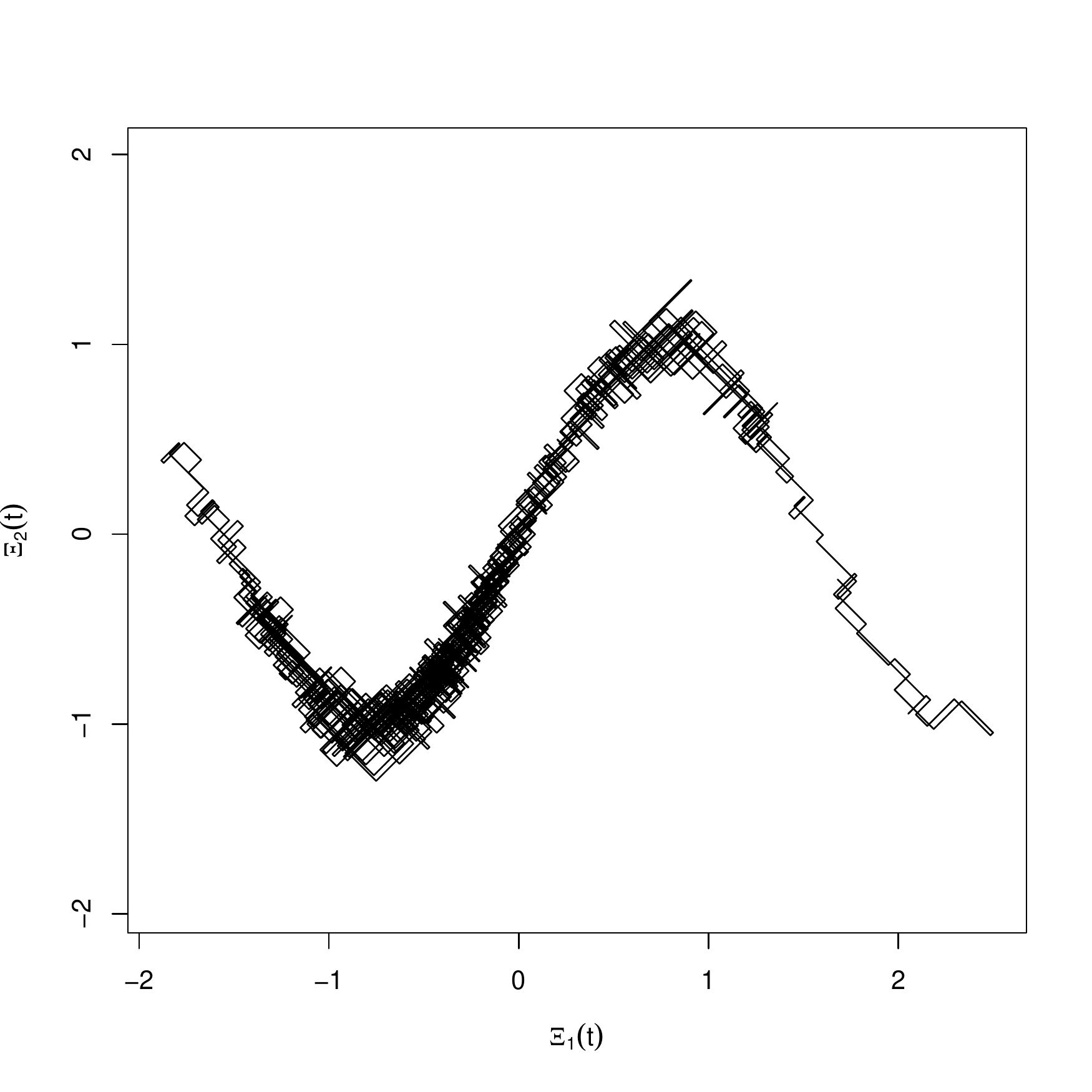}
 \caption{2D S-shaped density}
\end{subfigure}
\begin{subfigure}[b]{0.45 \textwidth}
 \includegraphics[width=\textwidth]{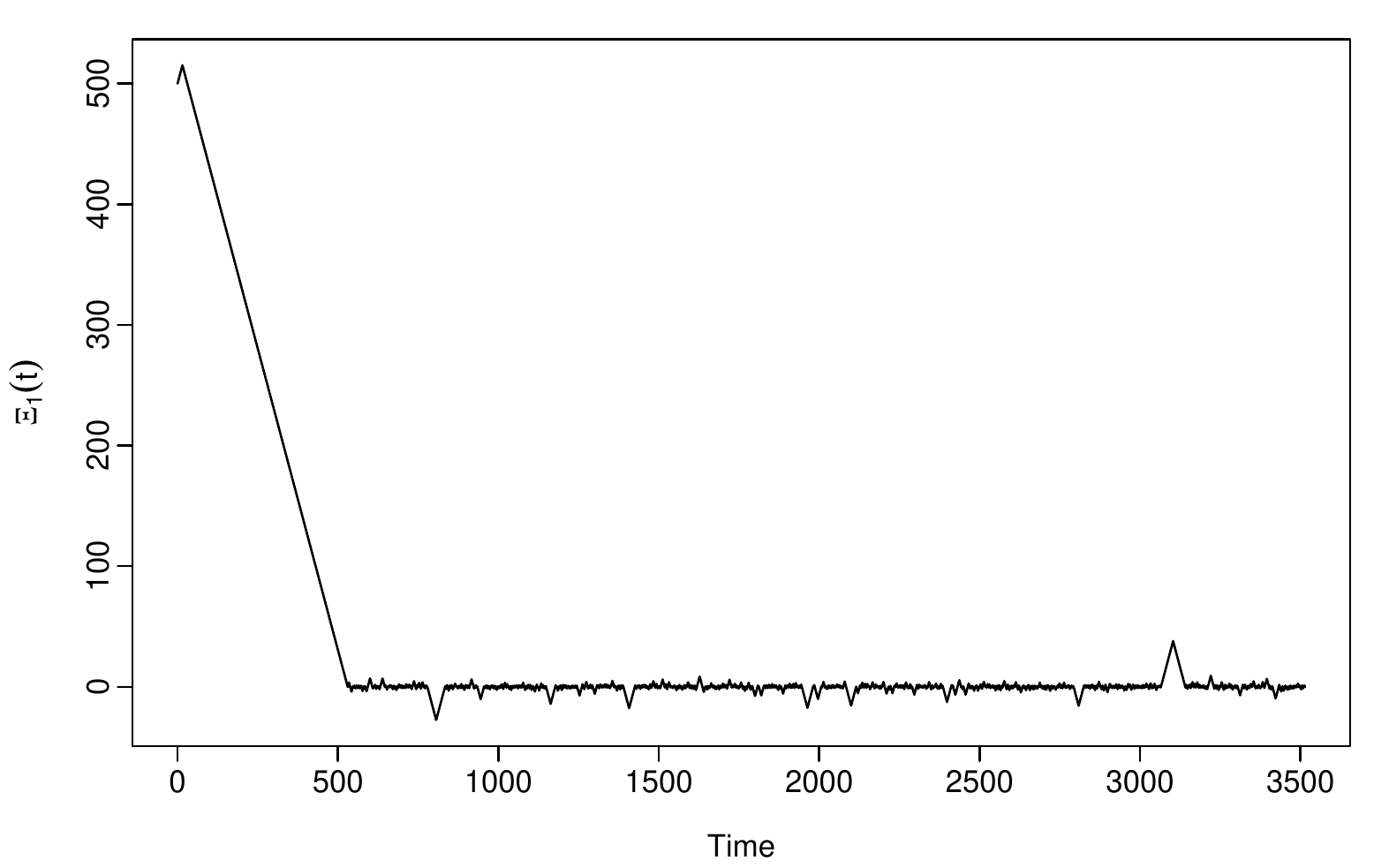}
 \caption{1D Cauchy}
\end{subfigure}
\begin{subfigure}[b]{0.45 \textwidth}
 \includegraphics[width=\textwidth]{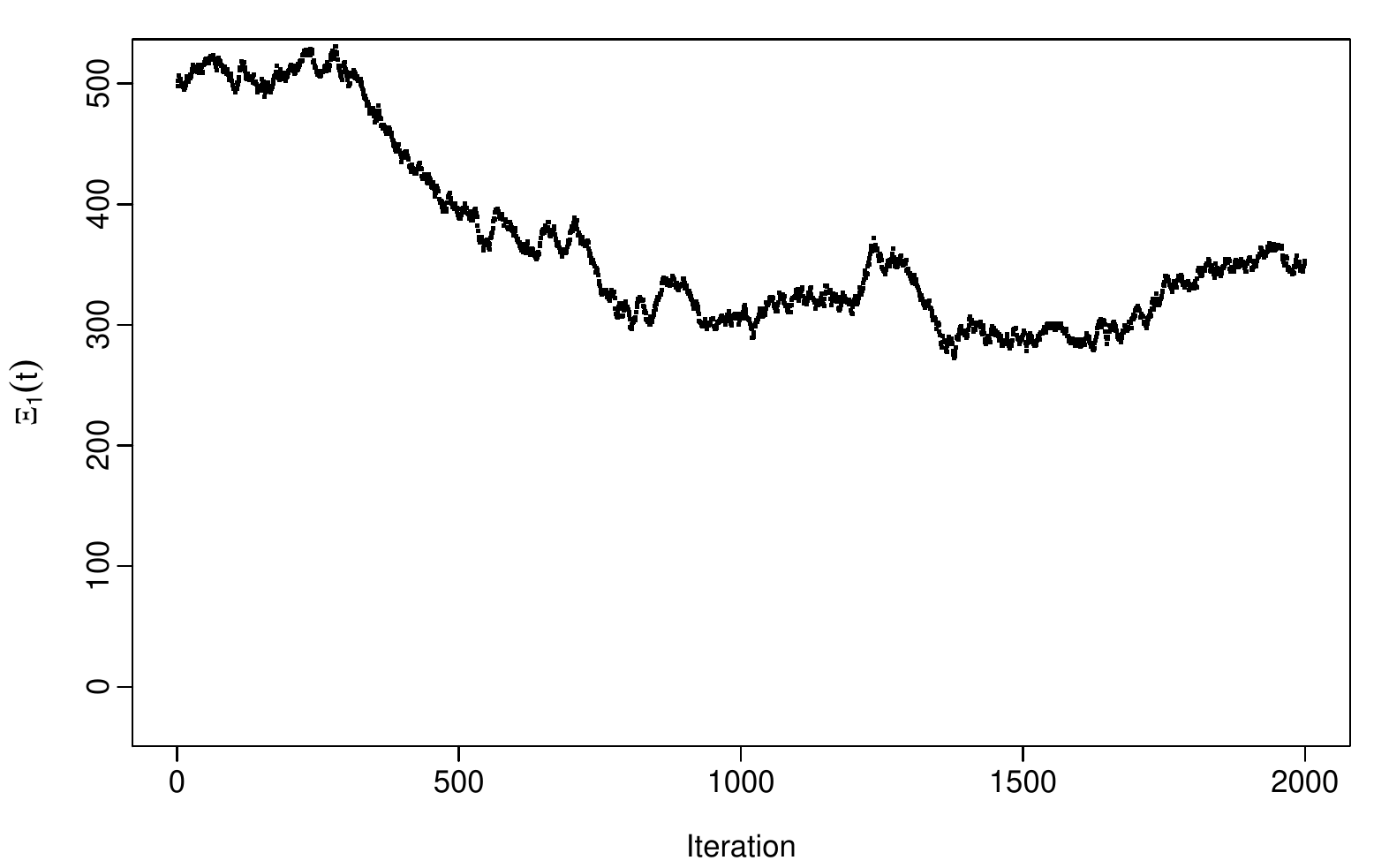}
 \caption{1D Cauchy -- MALA}
\end{subfigure}

\caption{Top two rows: example trajectories of the canonical Zig-Zag process. In (a) and (b) the horizontal axis shows time and the vertical axis the $\Xi$-coordinate of the 1D process. In (c) and (d), 
the trajectories in $\R^2$ of $(\Xi_1, \Xi_2)$ are plotted. Bottom row: Zig-Zag process (e) and MALA (f) for a Cauchy target with both processes started in the tail.
\label{fig:basic-examples} }
\end{figure}
\subsection{Invariant distribution}
\label{sec:invariant-dist}

The most important aspect of the Zig-Zag process is that in many cases the switching rates are directly related to an easily identifiable invariant distribution. Let $C^1(\R^d)$ denote the space of continuously differentiable functions on $\R^d$.
For $\theta \in \{-1,+1\}^d$ and $i \in \{1,\dots, d\}$, let $F_i[\theta] \in \{-1,+1\}^d$ denote the binary vector obtained by flipping the $i$-th component of $\theta$; i.e.
\[ (F_i[\theta])_j = \begin{cases} \theta_j \quad & \mbox{if} \ i \neq j, \\
                      - \theta_j \quad & \mbox{if} \ i = j.
                     \end{cases}.\]

We introduce the following assumption.

\begin{assumption}
\label{ass:invariant-measure}
For some function $\Psi \in C^1(\R^d)$ satisfying 
\begin{equation} \label{eq:integrability} \int_{\R^d} \exp(- \Psi(\xi)) \ d \xi < \infty\end{equation} we have
\begin{equation} \label{eq:relation_lambda} \lambda_i(\xi,\theta) - \lambda_i(\xi,F_i[\theta])  = \theta_i \partial_i \Psi(\xi) \quad \mbox{for all} \ (\xi,\theta) \in E,  i = 1, \dots, d.\end{equation}
\end{assumption}

Throughout this paper we will refer to $\Psi$ as the \emph{negative log density}.
Let $\mu_0$ denote the measure on $\mathcal B(E)$ such that, for $A \in \mathcal B(\R^d)$ and $\theta \in \{-1,+1\}^d$,
$\mu_0(A \times \{\theta\}) = \mathrm{Leb}(A)$,
with $\mathrm{Leb}$ denoting Lebesgue measure on $\R^d$.
\begin{theorem}
\label{thm:invariant_measure}
Suppose Assumption~\ref{ass:invariant-measure} holds. Let $\mu$ denote the probability distribution on $E$ such that $\mu$ has Radon-Nikodym derivative 
\begin{equation} \label{eq:inv-measure} \frac{d \mu}{d \mu_0}(\xi, \theta) = \frac {\exp(-\Psi(\xi))}{Z}, \quad (\xi, \theta) \in E,\end{equation}
where $Z = \int_E \exp(-\Psi) \ d \mu_0$.
Then the Zig-Zag process $(\Xi, \Theta)$ with switching rates $(\lambda_i)_{i=1}^d$ has invariant distribution $\mu$.
\end{theorem}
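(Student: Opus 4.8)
The plan is to argue at the level of the infinitesimal generator. The Zig-Zag process is a piecewise deterministic Markov process with deterministic flow $\dot\xi=\theta$ on all of $\R^d$ (so no active boundary, hence no boundary conditions in the Davis sense) and jumps $\theta\mapsto F_i[\theta]$ at rate $\lambda_i(\xi,\theta)$. Its (extended) generator therefore acts on functions $f:E\to\R$ that are continuously differentiable in $\xi$ by
\[ Lf(\xi,\theta)=\sum_{i=1}^d \theta_i\,\partial_i f(\xi,\theta)+\sum_{i=1}^d \lambda_i(\xi,\theta)\bigl(f(\xi,F_i[\theta])-f(\xi,\theta)\bigr). \]
I would first reduce the statement to proving the infinitesimal stationarity identity $\int_E Lf\,d\mu=0$ for all $f$ in a suitable class of test functions --- say $f$ continuously differentiable in $\xi$ with compact support in the $\xi$-variable. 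That this identity upgrades to genuine invariance of the transition semigroup is a standard, if technical, fact: it follows from Davis's characterisation of invariant measures for piecewise deterministic Markov processes, or from an Echeverr\'ia--Weiss / Ethier--Kurtz martingale-problem criterion, using that the Zig-Zag martingale problem is well posed because the rates are continuous (hence locally bounded) and trajectories move at unit speed, so only finitely many jumps occur on any bounded time interval.

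The core computation is then short. Since $\mu$ has $\mu_0$-density $e^{-\Psi}/Z$ by \eqref{eq:inv-measure} and $\mu_0$ restricts to Lebesgue measure on each sheet $\R^d\times\{\theta\}$, one has
\[ Z\int_E Lf\,d\mu=\sum_{\theta\in\{-1,+1\}^d}\int_{\R^d}Lf(\xi,\theta)\,e^{-\Psi(\xi)}\,d\xi. \]
For the transport part I integrate by parts in $\xi_i$; the boundary terms vanish because $f(\cdot,\theta)$ has compact support and $\lambda_i$, being continuous, is bounded on that support, and $\partial_i e^{-\Psi(\xi)}=-\partial_i\Psi(\xi)\,e^{-\Psi(\xi)}$ yields
\[ \int_{\R^d}\theta_i\,\partial_i f(\xi,\theta)\,e^{-\Psi(\xi)}\,d\xi=\int_{\R^d}\theta_i\,\partial_i\Psi(\xi)\,f(\xi,\theta)\,e^{-\Psi(\xi)}\,d\xi. \]
For the jump part I use that $\theta\mapsto F_i[\theta]$ is an involution of $\{-1,+1\}^d$: relabelling the outer sum turns $\sum_\theta\lambda_i(\xi,\theta)f(\xi,F_i[\theta])$ into $\sum_\theta\lambda_i(\xi,F_i[\theta])f(\xi,\theta)$, so the contribution of index $i$ to the jump term becomes $\sum_\theta\bigl(\lambda_i(\xi,F_i[\theta])-\lambda_i(\xi,\theta)\bigr)f(\xi,\theta)$, which by Assumption~\ref{ass:invariant-measure}, i.e.\ \eqref{eq:relation_lambda}, equals $-\sum_\theta\theta_i\,\partial_i\Psi(\xi)\,f(\xi,\theta)$. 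Summing over $i$ and integrating against $e^{-\Psi}\,d\xi$, the jump contribution cancels the transport contribution exactly, so $\int_E Lf\,d\mu=0$. Finiteness of $Z$, needed so that $\mu$ is a probability measure, is immediate from \eqref{eq:integrability} by summing over the $2^d$ values of $\theta$.

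I expect the genuinely delicate point not to be this cancellation --- which is one integration by parts plus a relabelling --- but the functional-analytic scaffolding around it: identifying a domain on which $L$ generates the Zig-Zag semigroup, checking that the chosen class of functions differentiable in $\xi$ is a core (or otherwise rich enough for the martingale-problem criterion), and controlling integrability of $Lf$ against $\mu$ when the $\lambda_i$ are merely continuous and possibly unbounded. A clean way to sidestep the unboundedness is to localise: verify the stationarity/martingale identity first for the process stopped on leaving a large ball, then let the ball grow, using that $\mu$ is finite and the unit-speed flow rules out explosion. With that in place, the identity $\int_E Lf\,d\mu=0$ established above gives the theorem.
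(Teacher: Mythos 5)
Your proposal is correct and follows essentially the same route as the paper's proof in the supplementary material: write down the PDMP (extended) generator $Lf(\xi,\theta)=\sum_i\theta_i\partial_i f+\sum_i\lambda_i(f\circ F_i-f)$, verify $\int_E Lf\,d\mu=0$ by integrating by parts in $\xi$ and relabelling via the involution $\theta\mapsto F_i[\theta]$ so that \eqref{eq:relation_lambda} produces the exact cancellation, and then invoke the standard PDMP/martingale-problem machinery (Davis-type characterisation) to upgrade infinitesimal stationarity to invariance of the semigroup. The core cancellation you give is the paper's computation, and your remarks on the technical scaffolding match the role that the cited PDMP theory plays there.
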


The proof is located in the Section 1 of the Supplementary Material. 
%
We see that under the invariant distribution of the Zig-Zag process, $\xi$ and $\theta$ are independent, with $\xi$ having density proportional to $\exp(-\Psi(\xi))$ and $\theta$ having a uniform distribution on the points in $\{-1,+1\}^d$.

%


For $a \in \R$, let $(a)^+ := \max(0, a)$ and $(a)^- := \max(0,-a)$ denote the positive and negative parts of $a$, respectively. We will often use the trivial identity $a = (a)^+ - (a)^-$ without comment. The following result characterizes the switching rates for which~\eqref{eq:relation_lambda} holds.

\begin{proposition}
\label{prop:characterization_lambda}
Suppose $\lambda : E \rightarrow \R^d_+$ is continuous. Then Assumption~\ref{ass:invariant-measure} is satisfied if and only if there exists a continuous function $\gamma : E \rightarrow \R_+^d$ such that for all $i = 1, \dots, d$ and $(\xi,\theta) \in E$, $\gamma_i(\xi,\theta) = \gamma_i(\xi, F_i [\theta])$ and, for $\Psi \in C^1(\R^d)$ satisfying~\eqref{eq:integrability},
\begin{equation} \label{eq:characterization_lambda} \lambda_i(\xi,\theta) = \left( \theta_i \partial_i \Psi(\xi)\right)^+ + \gamma_i(\xi,\theta).\end{equation}
\end{proposition}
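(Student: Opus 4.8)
The plan is to prove both directions by writing down an explicit candidate for $\gamma$ and manipulating \eqref{eq:relation_lambda} with nothing more than the identity $a = (a)^+ - (a)^-$ together with the hypothesis $\lambda \geq 0$. The $\Psi$ in the statement will be the same on both sides: in the forward implication it is the function furnished by Assumption~\ref{ass:invariant-measure}, and in the reverse implication the assumed $\Psi$ (satisfying \eqref{eq:integrability}) is exactly the one that witnesses Assumption~\ref{ass:invariant-measure}.

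For the ``if'' direction, assume $\lambda_i(\xi,\theta) = (\theta_i\partial_i\Psi(\xi))^+ + \gamma_i(\xi,\theta)$ with $\gamma$ continuous, nonnegative, and invariant under $\theta \mapsto F_i[\theta]$. Since the $i$-th component of $F_i[\theta]$ is $-\theta_i$, I would compute $\lambda_i(\xi,F_i[\theta]) = (-\theta_i\partial_i\Psi(\xi))^+ + \gamma_i(\xi,F_i[\theta]) = (\theta_i\partial_i\Psi(\xi))^- + \gamma_i(\xi,\theta)$, using $(-a)^+ = (a)^-$ and the invariance of $\gamma_i$. Subtracting then gives $\lambda_i(\xi,\theta) - \lambda_i(\xi,F_i[\theta]) = (\theta_i\partial_i\Psi(\xi))^+ - (\theta_i\partial_i\Psi(\xi))^- = \theta_i\partial_i\Psi(\xi)$, which is \eqref{eq:relation_lambda}; continuity of $\lambda$ and $\lambda \geq 0$ are immediate from continuity of $\partial_i\Psi$ and the stated properties of $\gamma$. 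Thus Assumption~\ref{ass:invariant-measure} holds.

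For the ``only if'' direction, assume Assumption~\ref{ass:invariant-measure} and \emph{define} $\gamma_i(\xi,\theta) := \lambda_i(\xi,\theta) - (\theta_i\partial_i\Psi(\xi))^+$. Continuity is clear. Equation~\eqref{eq:characterization_lambda} holds by construction. For the invariance $\gamma_i(\xi,\theta) = \gamma_i(\xi,F_i[\theta])$, substituting the definition and again using $(-a)^+ = (a)^-$ reduces the claim to $\lambda_i(\xi,\theta) - \lambda_i(\xi,F_i[\theta]) = (\theta_i\partial_i\Psi(\xi))^+ - (\theta_i\partial_i\Psi(\xi))^-$, i.e. to \eqref{eq:relation_lambda} via $a = (a)^+ - (a)^-$. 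The one point that requires a genuine argument — and the step I expect to be the only real obstacle — is nonnegativity of $\gamma$, i.e. $\lambda_i(\xi,\theta) \geq (\theta_i\partial_i\Psi(\xi))^+$. Here I would split on the sign of $\theta_i\partial_i\Psi(\xi)$: if it is nonpositive the right-hand side is $0$ and the inequality is just $\lambda_i \geq 0$; if it is positive, rearranging \eqref{eq:relation_lambda} gives $\lambda_i(\xi,\theta) = \lambda_i(\xi,F_i[\theta]) + \theta_i\partial_i\Psi(\xi) \geq \theta_i\partial_i\Psi(\xi)$ because $\lambda_i(\xi,F_i[\theta]) \geq 0$. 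This is the only place the hypothesis $\lambda \geq 0$ is used in an essential way, which completes the verification that $\gamma$ has all required properties and hence establishes the equivalence.
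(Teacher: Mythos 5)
Your proof is correct and follows essentially the same route as the paper's argument in the supplement: verify \eqref{eq:relation_lambda} directly from \eqref{eq:characterization_lambda} using $(-a)^+=(a)^-$, and conversely define $\gamma_i(\xi,\theta):=\lambda_i(\xi,\theta)-(\theta_i\partial_i\Psi(\xi))^+$, checking the $F_i$-invariance via \eqref{eq:relation_lambda} and nonnegativity by the case split on the sign of $\theta_i\partial_i\Psi(\xi)$ together with $\lambda\ge 0$. Your identification of the nonnegativity of $\gamma$ as the only step needing the hypothesis $\lambda\ge 0$ is exactly right.
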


The proof is located in Section 1 of the Supplementary Material. 

%


\subsection{Zig-Zag process for Bayesian inference}
\label{sec:bayesian-inference}

One application of the Zig-Zag process is as an alternative to MCMC for sampling from posterior distributions in Bayesian statistics. We show here that it is straightforward to derive a class of Zig-Zag processes
that have a given posterior distribution as their invariant distribution. The dynamics of the Zig-Zag process only depend on knowing the posterior density up to a constant of proportionality. 

To keep notation consistent with that used for the Zig-Zag process, let $\xi \in \R^d$ denote a vector of continuous parameters. We are given a prior density function for $\xi$, which we denote by $\pi_0(\xi)$, and observations
$x^{1:n}=(x^1,\ldots,x^n)$. Our model for the data defines a likelihood function $L(x^{1:n}|\xi)$. Thus the posterior density function is 
\[
 \pi(\xi) \propto \pi_0(\xi)L(x^{1:n}|\xi).
\]
We can write $\pi(\xi)$ in the form of the previous section, 
\[
 \pi(\xi) = \frac 1 Z \exp(-\Psi(\xi)), \quad \xi \in \R^d,
\]
where $\Psi(\xi)=-\log\pi_0(\xi)-\log L(x^{1:n}|\xi)$, and $Z = \int_{\R^d} \exp(-\Psi(\xi)) \ d \xi$ is the unknown normalising constant. 
Now assuming that $\log \pi_0(\xi)$ and $\log L(x^{1:n}|\xi)$ are both continuously differentiable with respect to $\xi$, 
from~\eqref{eq:characterization_lambda} 
a Zig-Zag process with rates
\[
 \lambda_i(\xi,\theta)= \left(\theta_i\partial_i \Psi(\xi)\right)^+
\]
will have the posterior density $\pi(\xi)$ as the marginal of its invariant distribution. 
We call the 
process with these rates the \emph{Canonical Zig-Zag process} for the negative log density $\Psi$. As explained in
Proposition~\ref{prop:characterization_lambda}, we can construct a family of Zig-Zag processes with the same invariant distribution by choosing any set of functions $\gamma_i(\xi,\theta)$, for $i=1,\ldots,d$, which
take non-negative values and for which
$\gamma_i(\xi,\theta)=\gamma_i(\xi,F_i[\theta])$, and setting
\[
 \lambda_i(\xi,\theta)= \left(\theta_i\partial_i \Psi(\xi)\right)^+ +  \gamma_i(\xi,\theta), \mbox{ for $i=1,\ldots,d$.}
\]
The intuition here is that $\lambda_i(\xi,\theta)$ is the rate at which we transition from $\theta$ to $F_i[\theta]$. The condition $\gamma_i(\xi,\theta)=\gamma_i(\xi,F_i[\theta])$ means that we increase
by the same amount both the rate at which we will transition from $\theta$ to $F_i[\theta]$ and vice versa. 
As our invariant distribution places the same probability of
being in a state with velocity $\theta$ as that of being in state $F_i[\theta]$, these two changes in rate cancel out in terms of their effect on the invariant distribution. Changing the rates in
this way does impact the dynamics of the process, with larger $\gamma_i$ values corresponding to more frequent changes in the velocity of the Zig-Zag process, and we would expect the resulting process
to mix more slowly. 

Under the assumption that the Zig-Zag process has the desired invariant distribution and 
is ergodic, it follows from the Birkhoff ergodic theorem that for any bounded continuous function $f : E \rightarrow \R$,
\[ \lim_{t \rightarrow \infty} \frac 1 t \int_0^t f(\Xi(s), \Theta(s)) \ d s = \int_E f \ d \mu,\]
for any initial condition $(\xi, \theta) \in E$. Sufficient conditions for ergodicity will be discussed in the following section. Taking $\gamma$ to be positive and bounded everywhere ensures ergodicity, as will be established in Theorem~\ref{thm:positive-irreducible}.

\subsection{Ergodicity of the Zig-Zag process}
\label{sec:ergodicity}

We have established in Section~\ref{sec:invariant-dist} that for any continuously differentiable, positive density $\pi$ on $\R^d$ a Zig-Zag process can be constructed that has $\pi$ as its marginal stationary density.
 In order for ergodic averages $\frac 1 T \int_0^T f(\Xi(s)) \ d s$ of the Zig-Zag process to converge asymptotically to $\pi(f)$, we further require $(\Xi(t), \Theta(t))$ to be ergodic, i.e. to admit a \emph{unique} invariant distribution. 

Ergodicity is directly related to the requirement that $(\Xi(t), \Theta(t))$ is irreducible, i.e. the state space is not reducible into components which are each invariant for the process $(\Xi(t), \Theta(t))$. 
For the one-dimensional Zig-Zag process, (exponential) ergodicity has already been established under mild conditions \cite[]{BierkensRoberts2015}. As we discuss below, irreducibility, and thus ergodicity, can be established for large classes of multi-dimensional target distributions, such as i.i.d. Gaussian distributions, and also if the switching rates $\lambda_i(\xi,\theta)$ are positive for all $i = 1, \dots, d$, and $(\xi,\theta) \in E$.

Let $P^t((\xi,\theta), \cdot)$ be the transition kernel of the Zig-Zag process with initial condition $(\xi,\theta)$.
A function $f : E \rightarrow \R$ is called \emph{norm-like} if $\lim_{\|\xi\| \rightarrow \infty} f(\xi, \theta) = \infty$ for all $\theta \in \{-1,+1\}^d$.
Let $\|\cdot\|_{\mathrm{TV}}$ denote the total variation norm on the space of signed measures.
First we consider the one-dimensional case.

\begin{assumption}
\label{ass:exp-ergodicity-1d}
Suppose $d = 1$ and there exists $\xi_0 > 0$ such that 
\begin{itemize}
 \item[(i)] $\inf_{\xi \geq \xi_0} \lambda(\xi,+1) >  \sup_{\xi \geq \xi_0} \lambda(\xi,-1)$, and
 \item[(ii)]  $\inf_{\xi \leq -\xi_0} \lambda(\xi,-1) > \sup_{\xi \leq -\xi_0} \lambda(\xi,+1)$.
\end{itemize}
\end{assumption}

\begin{proposition}{\cite[Theorem 5]{BierkensRoberts2015}}
\label{prop:ergodicity-1d}
Suppose Assumption~\ref{ass:exp-ergodicity-1d} holds. Then there exists a function $f : E \rightarrow [1,\infty)$ which is norm-like 
such that the Zig-Zag process is $f$-exponentially ergodic, i.e. there exists a constant $\kappa > 0$ and $0 < \rho < 1$ such that
\[ \| P^t((\xi,\theta), \cdot) - \pi\|_{\mathrm{TV}} \leq \kappa f(\xi,\theta)  \rho^t \quad \mbox{for all $(\xi, \theta) \in E$ and $t \geq 0$.}\]
\end{proposition}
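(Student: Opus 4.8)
\emph{Strategy.} The standard route is the Foster--Lyapunov / Meyn--Tweedie programme for continuous-time Markov processes: (1) exhibit a geometric drift condition $\mathcal L V \leq -\kappa V + b\,\1_C$ for the extended generator $\mathcal L$ of the one-dimensional Zig-Zag process, with $V$ norm-like and $C$ compact; (2) show that compact subsets of $E$ — in particular the sublevel sets of $V$ — are petite, and that the process is irreducible and aperiodic; and (3) invoke the Down--Meyn--Tweedie theorem to conclude $V$-exponential ergodicity with $f$ a constant multiple of $V$. The process is already known to be a well-defined, non-explosive Markov process (finitely many switches on bounded intervals, Section~\ref{sec:construction}), which is what makes Dynkin's formula and Gr\"onwall's inequality applicable to turn the infinitesimal drift bound into $\E_{(\xi,\theta)}[V(\Xi(t),\Theta(t))] \leq e^{-\kappa t} V(\xi,\theta) + b/\kappa$.

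\emph{Step 1: the Lyapunov function.} In one dimension the extended generator acts on functions absolutely continuous along the flow by
\[ \mathcal L g(\xi,\theta) = \theta\, \partial_\xi g(\xi,\theta) + \lambda(\xi,\theta)\bigl(g(\xi,-\theta) - g(\xi,\theta)\bigr). \]
Write $\underline{\lambda}_+ := \inf_{\xi \geq \xi_0}\lambda(\xi,+1)$ and $\overline{\lambda}_- := \sup_{\xi \geq \xi_0}\lambda(\xi,-1)$, so $\underline{\lambda}_+ > \overline{\lambda}_-$ by Assumption~\ref{ass:exp-ergodicity-1d}(i) (which in particular forces $\overline{\lambda}_- < \infty$). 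For $\xi \geq \xi_0$ I would take $V(\xi,\theta) = e^{a\xi}(1 + c\theta)$ with $a > 0$ and $c \in (0,1)$ to be chosen, and symmetrically $V(\xi,\theta) = e^{-a\xi}(1 - c\theta)$ for $\xi \leq -\xi_0$, glued by a strictly positive bounded Lipschitz function on the compact interval $[-\xi_0,\xi_0]$; after rescaling $V \geq 1$, and $V$ is norm-like since $1 \pm c > 0$. A direct computation gives, for $\xi \geq \xi_0$, $\mathcal L V(\xi,+1) = e^{a\xi}\bigl(a(1+c) - 2c\lambda(\xi,+1)\bigr)$ and $\mathcal L V(\xi,-1) = e^{a\xi}\bigl(-a(1-c) + 2c\lambda(\xi,-1)\bigr)$. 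Bounding $\lambda(\xi,+1) \geq \underline{\lambda}_+$ and $\lambda(\xi,-1) \leq \overline{\lambda}_-$, both right-hand sides are $\leq -\kappa V(\xi,\theta)$ for a common $\kappa > 0$ as soon as $\tfrac{2c\overline{\lambda}_-}{1-c} < a < \tfrac{2c\underline{\lambda}_+}{1+c}$, an interval which is non-empty precisely when $c < (\underline{\lambda}_+ - \overline{\lambda}_-)/(\underline{\lambda}_+ + \overline{\lambda}_-)$ — and this is where the \emph{strict} separation of rates is used. The region $\xi \leq -\xi_0$ is identical via part (ii), and on $C := [-\xi_0,\xi_0]\times\{-1,+1\}$ continuity of $\lambda$ and boundedness of $V$ give $\mathcal L V \leq b < \infty$; one also checks $V$ lies in the domain of the extended generator, the jump integrability being immediate since the process covers a finite distance in finite time.

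\emph{Step 2: petiteness, irreducibility, aperiodicity.} Here I would use a controllability argument: from any $(\xi,\theta) \in C$, with probability bounded below uniformly over $C$ there is no switch for a short deterministic leg and then exactly one switch within a prescribed small time window; iterating, the time-$t$ transition kernel (for a suitable fixed $t$) dominates a fixed positive multiple of $\mu_0$ restricted to a fixed open subset of $E$. This simultaneously delivers $\psi$-irreducibility with respect to $\mu_0$, aperiodicity (the target set is hit over an interval of times), and petiteness of all compact sets.

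\emph{Step 3: conclusion.} Combining the geometric drift of Step 1 with the petiteness and irreducibility of Step 2, the Down--Meyn--Tweedie theorem yields $0 < \rho < 1$ and a constant such that $\|P^t((\xi,\theta),\cdot) - \pi\|_{\mathrm{TV}} \leq C_0 V(\xi,\theta)\,\rho^t$; taking $f := \max(1, C_0 V)$, which is still norm-like and $[1,\infty)$-valued, gives the stated bound. I expect Step 1 to be the main obstacle: the correct form of the $\theta$-dependent exponential Lyapunov function, and the verification that the drift is strictly negative for \emph{both} velocities at large $|\xi|$, is exactly what the strict inequalities in Assumption~\ref{ass:exp-ergodicity-1d} are there to supply; the estimates in Step 2 are routine for piecewise-deterministic processes but somewhat technical to carry out in full.
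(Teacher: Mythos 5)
Your proposal is correct, and it is essentially the argument behind this result: the paper itself offers no proof but cites Theorem 5 of \cite{BierkensRoberts2015}, whose proof follows exactly this Foster--Lyapunov route --- a direction-dependent exponential Lyapunov function of the form $e^{a|\xi|}(1+c\,\theta\,\mathrm{sgn}(\xi))$ giving geometric drift off a compact set (with the strict rate separation in Assumption~\ref{ass:exp-ergodicity-1d} guaranteeing a valid choice of $a$ and $c$, just as in your interval condition), plus petiteness of compact sets via a controllability/minorization argument, and then the Down--Meyn--Tweedie theorem. Your drift computation and the non-emptiness condition $c<(\underline{\lambda}_+-\overline{\lambda}_-)/(\underline{\lambda}_++\overline{\lambda}_-)$ check out, so no gap to report.
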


\begin{example}
\label{ex:gaussian-ergodic}
As an example of fundamental importance, which will also be used in the proof of Theorem~\ref{thm:positive-irreducible}, consider a one-dimensional Gaussian distribution. For simplicity let $\pi(\xi)$ be centred, $\pi(\xi) = \frac 1 { \sqrt{ 2 \pi \sigma^2}} \exp\left( - \frac{\xi^2}{2 \sigma^2} \right)$ for some $\sigma > 0$.
According to~\eqref{eq:characterization_lambda} the switching rates take the form
\[ \lambda(\xi,\theta) = \left( \theta \xi/\sigma^2 \right)^+ + \gamma(\xi), \quad (\xi, \theta) \in E.\]
As long as $\gamma$ is bounded from above, Assumption~\ref{ass:exp-ergodicity-1d} is satisfied. In particular this holds if $\gamma$ is equal to a non-negative constant.
\end{example}

\begin{remark}
We say a probability density function $\pi$ is of \emph{product form} if $\pi(\xi) = \prod_{i=1}^d \pi_i(\xi_i)$, where $\pi_i : \R^d \rightarrow (0,\infty)$ are one-dimensional probability density functions. 
When its target density is of product form the Zig-Zag process is the concatenation of independent Zig-Zag processes. In this case the negative log density is of the form $\Psi(\xi) = \sum_{i=1}^d \Psi_i(\xi_i)$ and the switching rate for the $i$-th component of $\theta$ is
\begin{equation} \label{eq:lambda_product_form} \lambda_i(\xi,\theta) = \left( \theta_i \Psi_i'(\xi_i) \right)^+ + \gamma_i(\xi).\end{equation}
As long as $\gamma_i(\xi) = \gamma_i(\xi_i)$, i.e. if $\gamma_i(\xi)$ only depends on the $i$-th coordinate of $\xi$, the switching rate of coordinate $i$ is independent of the other coordinates $\xi_j$, $j \neq i$. It follows that the switches of the $i$-th coordinate can be generated by a one-dimensional time inhomogeneous Poisson process, which is independent of the switches in the other coordinates. As a consequence the $d$-dimensional Zig-Zag process $(\Xi(t), \Theta(t)) = (\Xi^1(t), \dots, \Xi_d(t), \Theta^1(t), \dots, \Theta^d(t))$ consists of a combination of $d$ independent Zig-Zag processes $(\Xi^i(t), \Theta^i(t))$, $i = 1, \dots, d$.
\end{remark}

Suppose $P(x, dy)$ is the transition kernel of a Markov chain on a state space $E$. We say that the Markov chain associated to $P$ is \emph{mixing} if there exists a probability distribution $\pi$ on $E$ such that 
\[ \lim_{k \rightarrow \infty} \| P^k (x,\cdot) - \pi \|_{\mathrm{TV}} = 0 \quad \mbox{for all $x \in E$.}\]
For any continuous time Markov process with family of transition kernels $P^t(x,dy)$ we can consider the associated \emph{time-discretized process}, which is a Markov chain with transition kernel $Q(x, dy) :=  P^{\delta}(x, dy)$ for a fixed $\delta > 0$. The value of $\delta$ will be of no significance in our use of this construction.

\begin{proposition}
\label{prop:product-mixing}
Suppose $\pi$ is of product form and $\lambda : E \rightarrow \R_+^d$ admits the representation~\eqref{eq:lambda_product_form} with $\gamma_i(\xi)$ only depending on $\{\xi_i, i = 1, \dots, d\}$. Furthermore suppose that for every $i = 1, \dots, d$, the one-dimensional time-discretized Zig-Zag process corresponding to switching rate $\lambda_i$ is mixing in $\R \times \{-1, +1\}$. Then the time-discretized $d$-dimensional Zig-Zag process with switching rates $(\lambda_i)$ is mixing. In particular, the multi-dimensional Zig-Zag process admits a unique invariant distribution. 
\end{proposition}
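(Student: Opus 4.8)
The plan is to exploit the product structure. Under the stated hypotheses the $d$-dimensional Zig-Zag process decouples into $d$ mutually independent one-dimensional Zig-Zag processes, so its transition kernel is a product kernel; mixing of a \emph{finite} product of Markov chains then follows from mixing of the factors together with the sub-additivity of the total variation distance on product measures, and uniqueness of the invariant distribution is a standard consequence of mixing.

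First I would make the decoupling precise. Since $\pi(\xi)=\prod_{i=1}^d\pi_i(\xi_i)$ we have $\Psi(\xi)=\sum_{i=1}^d\Psi_i(\xi_i)$ with $\Psi_i=-\log\pi_i$, and because $\gamma_i(\xi)$ depends on $\xi_i$ only, the rate $\lambda_i$ in~\eqref{eq:lambda_product_form} depends only on $(\xi_i,\theta_i)$. In the construction of Section~\ref{sec:construction} the clocks $\tau_i^k$ are then driven, for each $i$, by a rate function of the single coordinate process $(\Xi^i,\Theta^i)$, so these coordinate processes evolve as $d$ independent one-dimensional Zig-Zag processes with rates $\lambda_i$ (this is the content of the Remark preceding the proposition). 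Consequently, writing $P_i^t$ for the transition kernel of the $i$-th one-dimensional process,
\[ P^t\big((\xi,\theta),\cdot\big)=\bigotimes_{i=1}^d P_i^t\big((\xi_i,\theta_i),\cdot\big), \qquad\text{hence}\qquad Q^k=(P^\delta)^k=\bigotimes_{i=1}^d (P_i^\delta)^k=\bigotimes_{i=1}^d Q_i^k, \]
which I would verify by evaluating both sides on product sets. I expect this step to be the main (though modest) obstacle: one has to argue carefully that the product-form hypotheses — in particular $\gamma_i(\xi)=\gamma_i(\xi_i)$ — genuinely make the coordinate clocks independent, so that $P^t$ is \emph{exactly} a product kernel; everything afterwards is bookkeeping with the total variation distance.

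Next I would record the elementary estimate that for probability measures $\mu^{(i)},\nu^{(i)}$, $i=1,\dots,d$,
\[ \Big\| \bigotimes_{i=1}^d \mu^{(i)} - \bigotimes_{i=1}^d \nu^{(i)} \Big\|_{\mathrm{TV}} \le \sum_{i=1}^d \big\| \mu^{(i)} - \nu^{(i)} \big\|_{\mathrm{TV}}, \]
proved from the telescoping identity $\bigotimes_i\mu^{(i)}-\bigotimes_i\nu^{(i)}=\sum_{j=1}^d\big(\bigotimes_{i<j}\nu^{(i)}\big)\otimes(\mu^{(j)}-\nu^{(j)})\otimes\big(\bigotimes_{i>j}\mu^{(i)}\big)$, the triangle inequality, and the identity $\|\sigma\otimes\rho\|_{\mathrm{TV}}=\|\sigma\|_{\mathrm{TV}}$ valid for any signed measure $\sigma$ and probability measure $\rho$. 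Let $\mu$ be as in Theorem~\ref{thm:invariant_measure}; then $\mu=\bigotimes_{i=1}^d\pi_i^{\ast}$, where $\pi_i^{\ast}$ is the probability measure on $\R\times\{-1,+1\}$ with $\R$-marginal density $\propto\exp(-\Psi_i)$ and $\theta_i$ uniform, which by Theorem~\ref{thm:invariant_measure} in dimension one is invariant for the $i$-th one-dimensional process and hence coincides with the mixing limit of $Q_i$. Applying the estimate with $\mu^{(i)}=Q_i^k((\xi_i,\theta_i),\cdot)$ and $\nu^{(i)}=\pi_i^{\ast}$ gives
\[ \big\| Q^k\big((\xi,\theta),\cdot\big)-\mu\big\|_{\mathrm{TV}} \le \sum_{i=1}^d \big\| Q_i^k\big((\xi_i,\theta_i),\cdot\big)-\pi_i^{\ast}\big\|_{\mathrm{TV}} \longrightarrow 0 \quad (k\to\infty) \]
by the assumed mixing of each one-dimensional time-discretized process, so the $d$-dimensional time-discretized process is mixing with limit $\mu$.

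Finally, for uniqueness: if $\nu$ is any invariant distribution of the continuous-time $d$-dimensional Zig-Zag process, then $\nu Q^k=\nu$ for all $k$ since $Q=P^\delta$; writing $\nu Q^k=\int Q^k((\xi,\theta),\cdot)\,\nu(d\xi\,d\theta)$ and passing to the limit using $\|Q^k((\xi,\theta),\cdot)-\mu\|_{\mathrm{TV}}\le 2$ and dominated convergence yields $\|\nu-\mu\|_{\mathrm{TV}}=0$, i.e. $\nu=\mu$. Hence $\mu$ is the unique invariant distribution of the multi-dimensional Zig-Zag process.
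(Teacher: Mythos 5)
Your argument is correct and follows essentially the same route as the paper: the paper's proof invokes the decomposition of the $d$-dimensional Zig-Zag process into $d$ independent one-dimensional Zig-Zag processes (the content of the preceding Remark) together with a lemma in the Supplementary Material stating that a finite product of mixing Markov chains is mixing, which is exactly the telescoping total-variation estimate you prove. Your concluding dominated-convergence step for uniqueness of the invariant distribution is the standard consequence of mixing and matches the intended argument.
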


\begin{proof}
This follows from the decomposition of the $d$-dimensional Zig-Zag process as $d$ one-dimensional Zig-Zag processes and Lemma 1.1 in the Supplementary material.
\end{proof}

\begin{example}
\label{ex:gaussian-multidimensional}
Continuing  Example~\ref{ex:gaussian-ergodic}, consider the simple case in which $\pi$ is of product form with each $\pi_i$ a centered Gaussian density function with variance $\sigma_i^2$. It follows from Proposition~\ref{prop:product-mixing} and Example~\ref{ex:gaussian-ergodic} that the multi-dimensional canonical Zig-Zag process (i.e. the Zig-Zag process with $\gamma_i \equiv 0$) is mixing. 
This is different from the Bouncy Particle Sampler \cite[]{Bouchard-Cote2015}, which is not ergodic for an i.i.d. Gaussian without `refreshments' of the momentum variable. 
\end{example}

We now show that strict positivity of the rates ensures ergodicity.


\begin{theorem}
\label{thm:positive-irreducible}
Suppose $\lambda : E \rightarrow (0,\infty)^d$, in particular $\lambda_i(\xi,\theta)$ is positive for all $i = 1, \dots,d$ and $(\xi,\theta) \in E$. Then there exists at most a single invariant measure for the Zig-Zag process with switching rate $\lambda$.
\end{theorem}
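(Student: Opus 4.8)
The plan is to prove the stronger statement that, for an arbitrary fixed $\delta>0$, the time-discretized skeleton chain $Q:=P^\delta$ is $\mu_0$-irreducible in the sense of the theory of $\psi$-irreducible Markov chains (with $\psi=\mu_0$). Uniqueness of the invariant probability measure of $Q$ is then a standard consequence, and since any invariant probability measure $\rho$ of the Zig-Zag process satisfies $\rho P^\delta=\rho$ and is hence invariant for $Q$, the conclusion for the process follows. (I take \emph{invariant measure} to mean invariant probability measure here, consistent with the notion of unique invariant distribution discussed above.)

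The crux is the following reachability estimate, which I would establish first: for every $(\xi,\theta)\in E$ and every $A\in\mathcal B(E)$ with $\mu_0(A)>0$ there is an integer $n\geq1$ with $Q^n\big((\xi,\theta),A\big)>0$. Pick a Lebesgue density point $(\eta,\sigma)$ of $A$ and a small open ball $B\ni\eta$. Choose $n$ large enough that the travel distance $n\delta$ in each coordinate comfortably exceeds $\|\eta-\xi\|_\infty$, set $t=n\delta$, and fix a nominal piecewise-constant velocity path that starts at $(\xi,\theta)$, undergoes a finite prescribed sequence of single-component velocity flips at prescribed times $0<s_1<\dots<s_m<t$, and reaches $\eta$ with terminal velocity $\sigma$: such a path exists because in each coordinate the displacement equals $\int_0^t(\text{velocity})\,ds$, and with at most a few flips per coordinate, interleaved into one increasing time sequence, any displacement in $(-t,t)$ and any terminal sign can be realized. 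Using the construction of Section~\ref{sec:construction} together with positivity and local boundedness of the continuous rates $\lambda_i$: over a compact time interval the probability of \emph{no} flip of a given component equals $\exp(-\int\lambda_i)$, bounded below by a positive constant; and each flip time has a density proportional to the strictly positive rate, so it falls into any prescribed open subinterval with positive probability. Multiplying these contributions over the finitely many legs of the path, the event ``the skeleton shadows the prescribed flip pattern, with flip times in small open neighbourhoods of $s_1,\dots,s_m$ and no other flips before time $t$'' has positive probability; integrating out the flip times, the law of $(\Xi(t),\Theta(t))$ restricted to this event has a density with respect to $\mu_0$ that is positive on an open neighbourhood $V$ of $(\eta,\sigma)$ (the terminal state depends smoothly, and submersively, on the flip times — a rank condition on that map). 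Since $(\eta,\sigma)$ is a density point of $A$, $\mu_0(A\cap V)>0$, whence $Q^n((\xi,\theta),A)>0$. This shows $Q$ is $\mu_0$-irreducible.

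To conclude: by Proposition~\ref{prop:characterization_lambda}, positivity of $\lambda$ forces $\lambda_i=(\theta_i\partial_i\Psi)^+ +\gamma_i$ with $\gamma_i>0$, so Assumption~\ref{ass:invariant-measure} holds and, by Theorem~\ref{thm:invariant_measure}, $\mu$ is an invariant probability measure for the process, hence for $Q$. A $\psi$-irreducible Markov chain admits at most one invariant probability measure, so $\mu$ is the unique invariant probability measure of $Q$; and any invariant probability measure $\rho$ of the Zig-Zag process equals $\mu$ since $\rho P^\delta=\rho$. (Example~\ref{ex:gaussian-ergodic} provides a concrete template for the explicit, strictly positive transition-density computation underlying the density/minorization step, which in the multi-dimensional case can be carried out coordinate by coordinate.)

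The main obstacle is precisely this positive-density step: one must verify carefully that, although velocity flips occur one component at a time along a single ordered sequence of event times, the prescribed control path can still be realized; that the shadowing event has positive probability even though the rates are only assumed continuous and strictly positive (local lower bounds on compact sets suffice, but there is no global lower bound); and that the induced law of the terminal state genuinely has a density that is positive on an open set. The restriction to time exactly $n\delta$, rather than an arbitrary horizon, and the freedom to enlarge $n$ to overcome the unit coordinate speed, are what make this go through.
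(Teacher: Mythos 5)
Your argument is essentially sound, but it takes a genuinely different route from the paper. The paper proves this theorem by a Girsanov change of measure on path space: over a finite time horizon the law of the Zig-Zag process with positive continuous rates $\lambda$ is equivalent to the law of a Zig-Zag process targeting an i.i.d.\ standard normal distribution, whose irreducibility is already available from the product-form results (Proposition~\ref{prop:product-mixing} together with Example~\ref{ex:gaussian-multidimensional}); irreducibility is then transferred through the equivalence of laws, and uniqueness of the invariant measure follows. Your proof instead establishes $\mu_0$-irreducibility of the skeleton chain $P^\delta$ directly, by a controllability-plus-minorization argument: construct a nominal piecewise-linear path with prescribed single-component flips, lower-bound the probability of shadowing it using local upper and lower bounds on the continuous positive rates over a compact tube, and push forward the flip-time density through the (full-rank) map to the terminal state to obtain a positive density near the target point. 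What your route buys is self-containedness — no reference process, no stochastic-calculus machinery, and no reliance on the one-dimensional ergodicity results; what the paper's route buys is that the delicate ``positive density of the terminal state'' step is never needed, since irreducibility is inherited wholesale from a process for which it is already known.

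Two points in your write-up deserve tightening. First, the submersion claim requires that \emph{every} coordinate flips at least once along the nominal path: a coordinate with no flips has deterministic terminal position, so the law on that event is supported on a hyperplane and has no density with respect to $\mu_0$. This is easily arranged (one flip to change the terminal sign, two to preserve it, in each coordinate), but it should be stated, since it is exactly what makes the rank condition hold. Second, your concluding claim that positivity of $\lambda$ ``forces'' the representation of Proposition~\ref{prop:characterization_lambda} and hence Assumption~\ref{ass:invariant-measure} is not justified: the theorem assumes only that $\lambda$ is positive and continuous, and no integrable $\Psi$ need exist, so Theorem~\ref{thm:invariant_measure} need not apply and existence of an invariant measure is not guaranteed. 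Fortunately this step is superfluous: the statement only asserts \emph{at most} one invariant measure, which follows from $\mu_0$-irreducibility of $P^\delta$ alone, exactly as in the first part of your argument.
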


The proof of this result consists of a Girsanov change of measure with respect to a Zig-Zag process targeting an i.i.d. standard normal distribution, which we know to be irreducible. The irreducibility then carries over to the Zig-Zag process with the stated switching rates. A detailed proof can be found in the Supplementary material. 

\begin{remark}
Based on numerous experiments, we conjecture that the canonical multi-dimensional Zig-Zag process is ergodic in general under only mild conditions.
A detailed investigation of ergodicity  will be the subject of a forthcoming paper \cite[]{Bierkens2017}.
\end{remark}

\section{Implementation}
\label{sec:sim}

As mentioned earlier, the main computational challenge is an efficient simulation of the random times $T^k_i$ introduced in Section~\ref{sec:construction}. We will focus on simulation by means of Poisson thinning.

\begin{proposition}[Poisson thinning, {\cite{LewisShedler1979}}]
\label{prop:poisson-thinning}
Let $m : \R_+ \rightarrow \R_+$ and $M : \R_+ \rightarrow \R_+$ be continuous such that $m(t) \leq M(t)$ for $t \geq 0$. Let $\tau^1, \tau^2, \dots$ be the increasing finite or infinite sequence of points of a Poisson
process with rate function $(M(t))_{t \geq 0}$. For all $i$, delete the point $ \tau^i$ with probability $1 - m(\tau^i) / M(\tau^i)$. Then the remaining points, $\widetilde \tau^1, \widetilde \tau^2, \dots$ say, form a non-homogeneous Poisson process  with rate function $(m(t))_{t \geq 0}$.
\end{proposition}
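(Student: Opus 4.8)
The plan is to verify directly that the retained points satisfy the defining finite-dimensional characterisation of an inhomogeneous Poisson process with rate $m$: for any finite family of pairwise disjoint bounded intervals $I_1, \dots, I_k \subset \R_+$, the numbers of retained points lying in $I_1, \dots, I_k$ are mutually independent, and the number lying in $I_j$ is Poisson-distributed with mean $\int_{I_j} m(t)\,dt$. Since a simple point process on $\R_+$ is determined by these laws, this is enough. One first observes a harmless triviality: on $\{t : M(t) = 0\}$ we also have $m(t) = 0$, and almost surely the dominating process has no points there, so the value of the deletion probability on that set is immaterial and may be fixed arbitrarily; from now on $p(t) := m(t)/M(t)$ is well defined wherever it matters, with $0 \le p(t) \le 1$.

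By the independence and Poisson properties of the dominating process across disjoint sets, the whole claim reduces to a single bounded interval $[0,T]$, and in fact to the following elementary lemma. Let $N \sim \mathrm{Poisson}(\mu)$ with $\mu \in (0,\infty)$, and suppose that conditionally on $N = n$ the points $\tau^1, \dots, \tau^n$ are i.i.d.\ with common law $\nu$ on $[0,T]$; retain each $\tau^i$ independently with probability $p(\tau^i)$, for a measurable $p : [0,T] \to [0,1]$. Then the retained points form an inhomogeneous Poisson process on $[0,T]$ with intensity measure $\mu\, p \, d\nu$. Applying this with $\mu = \int_0^T M(t)\,dt$, $d\nu(t) = M(t)\,dt/\mu$ and $p(t) = m(t)/M(t)$ produces intensity measure with density $m(t)$ on $[0,T]$; letting $T \to \infty$ and re-assembling disjoint intervals via the independent-increments structure of the dominating process yields the full statement.

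To prove the lemma I would fix disjoint $I_1, \dots, I_k$ partitioning $[0,T]$ and look at the vector of (retained points in $I_1$, \dots, retained points in $I_k$). Conditionally on $N = n$, each point independently lands in cell $j$ (i.e.\ retained and in $I_j$) with probability $q_j := \int_{I_j} p \, d\nu$, and in a final ``deleted'' cell with probability $1 - \sum_j q_j$; hence the cell counts are multinomial with parameters $(n; q_1, \dots, q_k, 1-\sum_j q_j)$. Mixing a multinomial over a $\mathrm{Poisson}(\mu)$ total (the standard Poissonisation/colouring fact) makes the cell counts independent with the $j$-th one $\mathrm{Poisson}(\mu q_j)$, and $\mu q_j = \mu \int_{I_j} p\,d\nu = \int_{I_j} m(t)\,dt$, as required. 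A cleaner but heavier alternative is the marking route: attach i.i.d.\ $\mathrm{Uniform}(0,1)$ marks $U_i$ to the points $\tau^i$ of the dominating process, so that $\{(\tau^i, U_i)\}$ is a Poisson process on $\R_+ \times [0,1]$ with intensity $M(t)\,dt\,du$ by the marking theorem; the retained points are the projection onto the first coordinate of the restriction of this process to $\{(t,u) : u \le m(t)/M(t)\}$, and restriction of a Poisson process to a measurable set followed by the mapping theorem gives a Poisson process with intensity $\int_0^1 \1\{u \le m(t)/M(t)\}\,M(t)\,du = m(t)$.

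The main obstacle here is not any single computation but the bookkeeping: one must ensure the thinning is performed independently across points with a location-dependent probability, that the ``mixed multinomial--Poisson is a product of Poissons'' step is applied to a genuine partition of the interval (with the deleted mass absorbed into an extra cell), and that the passage from each $[0,T]$ to all of $\R_+$ genuinely respects the independence of the dominating process on disjoint intervals. The marking-theorem argument side-steps most of this, at the cost of invoking the marking, restriction and mapping theorems for Poisson processes as black boxes.
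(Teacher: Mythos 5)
Your proposal is correct, but note that the paper itself offers no proof of this proposition: it is stated as a quoted classical result with a citation to Lewis and Shedler (1979), so there is no internal argument to compare against. Your write-up is essentially the standard textbook proof of Poisson thinning, and both of your routes are sound: the reduction to a bounded interval $[0,T]$ using the conditional representation (Poisson total count, i.i.d.\ locations with density $M/\mu$), the multinomial colouring with a ``deleted'' cell, and the Poissonisation step giving independent Poisson cell counts with means $\mu q_j = \int_{I_j} m(t)\,dt$; or, alternatively, attaching i.i.d.\ uniform marks and invoking the marking, restriction and mapping theorems to read off intensity $m(t)$ directly. The small points you flag are handled correctly: the set $\{M=0\}$ carries zero intensity mass so the deletion probability there is immaterial, the thinning decisions are independent across points with location-dependent retention probability, and the finite-dimensional count distributions over disjoint bounded intervals determine the law of the (locally finite, simple) retained process, with the extension from $[0,T]$ to $\R_+$ following from the independent-increments structure of the dominating process. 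In short: a complete and correct proof of a result the paper merely cites.
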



Now for a given initial point $(\xi, \theta) \in E$, let $m_i(t) := \lambda_i(\xi + \theta t, \theta)$, for $i = 1, \dots, d$, and suppose we have available continuous functions $M_i(t)$ such that $m_i(t) \leq M_i(t)$ for $i = 1, \dots, d$ and $t \geq 0$. We call these $(M_i)_{i=1}^d$ \emph{computational bounds} for $(m_i)_{i=1}^d$. We can use Proposition~\ref{prop:poisson-thinning} to obtain the first switching times $(\widetilde \tau^1_i)_{i=1}^d$ from a (theoretically infinite) collection of \emph{proposed switching times} $(\tau^1_i, \tau^2_i, \dots)_{i=1}^d$ given the initial point $(\xi, \theta)$, and use the obtained skeleton point at time $\widetilde \tau^1 := \min_{i \in \{1, \dots, d \}} \widetilde \tau_i^1$ as a new initial point (which is allowed by the strong Markov property) with the component $i_0 = \argmin_{i \in \{1, \dots, d \}} \widetilde \tau_i^1$ of $\theta$ switched. 

The strong Markov property of the Zig-Zag process simplifies the computational procedure further: we can draw for each component $i = 1, \dots, d$ the first proposed switching time $\tau_i := \tau_i^1$, determine $i_0 := \argmin_{i  \in \{1, \dots, d\}} \tau_i$ and decide whether the appropriate component of $\theta$ is switched at this time with probability $m_{i_0}(\tau)/M_{i_0}(\tau)$, where $\tau := \tau_{i_0}$. Then since $\tau$ is a stopping time for the Markov process, we can use the obtained point of the Zig-Zag process at time $\tau$ as new starting point, regardless of whether we switch a component of $\theta$ at the obtained skeleton point. A full computational procedure for simulating the Zig-Zag process is given by Algorithm~\ref{alg:general}.

\begin{algorithm}
Input: initial condition $(\xi, \theta) \in E$. \\
Output: a sequence of skeleton points $(T^k, \Xi^k, \Theta^k)_{k=0}^{\infty}$.

\begin{enumerate}
 \item $(T^0, \Xi^0, \Theta^0) := (0, \xi, \theta)$.
 \item for $k = 1, 2, \dots$
 \begin{enumerate}
 \item Define $m_i(t) := \lambda_i(\Xi^{k-1} + \Theta^{k-1} t, \Theta^{k-1})$ for $t \geq 0$ and $i = 1, \dots, d$.
 \item For $i = 1, \dots, d$, let $(M_i)$ denote computational bounds for $(m_i)$.
 \item Draw $\tau_1, \dots, \tau_d$ such that $\P \left(\tau_i \geq t \right) = \exp \left( - \int_0^t M_i(s) \ d s \right)$.
 \item $i_0 := \argmin_{i=1, \dots, d} \{ \tau_i\}$ and $\tau := \tau_{i_0}$.
 \item $(T^k, \Xi^k) := (T^{k-1} + \tau, \Xi^{k-1} + \Theta^{k-1} \tau)$
 \item With probability $m_{i_0}(\tau)/M_{i_0}(\tau)$,
 \begin{itemize}
 \item $\Theta^{k} := F_{i_0} [\Theta^{k-1}]$,
 \end{itemize}
 otherwise
 \begin{itemize}
 \item $\Theta^{k} := \Theta^{k-1}$.
 \end{itemize}
\end{enumerate}
\end{enumerate}
\caption{Zig-Zag Sampling (ZZ)}
\label{alg:general}
\end{algorithm}

\subsection{Computational bounds}
\label{sec:compbd}

We now come to the important issue of obtaining computational bounds for the Zig-Zag Process, i.e. useful upper bounds for the switching rates $(m_i)$. If we can compute the inverse function $G_i(y) := \inf \{t \geq 0 : H_i(t) \geq y\}$ of $H_i : t \mapsto \int_0^t M_i(s) \ d s$, we can simulate $\tau_1, \dots, \tau_d$ using the CDF inversion technique, i.e. by drawing i.i.d. uniform random variables $U_1, \dots, U_d$ and setting $\tau_i := G_i(- \log U_i)$, $i =1, \dots d$. 

Let us ignore the subscript $i$ for a moment. Examples of computational bounds are piecewise affine bounds of the form $M: t \mapsto ( a + b t)^+$, with $a, b \in \R$, 
and the constant bounds $M : t \mapsto c$ for $c \geq 0$. 
It is also possible to simulate using the combined rate $M: t \mapsto \min(c, (a +bt)^+)$. In these cases, $H(t) = \int_0^t M(s) \ d s$ is piecewise linear or quadratic and non-decreasing, so we can obtain an explicit expression for the inverse function, $G$. 

The computational bounds are directly related to the algorithmic efficiency of Zig-Zag Sampling. From Algorithm~\ref{alg:general}, it is clear that for every simulated time $\tau$ a single component of $\lambda$ needs to be evaluated, which corresponds by~\eqref{eq:characterization_lambda} to the evaluation of a single component of the gradient of the negative log density $\Psi$. The magnitude of the computational bounds, $(M_i)$, will 
determine how far the Zig-Zag process will have moved in the state space before a new evaluation of a component of $\lambda$ is required, and 
we will pay close attention to the scaling of $M_i$ with respect to the number of available observations in a Bayesian inference setting.

\subsection{Example: globally bounded log density gradient}
If there are constants $c_i > 0$ such that $\sup_{\xi \in \R^d} |\partial_i \Psi(\xi)| \leq c_i$, $i =1, \dots d$, then we can use the global upper bounds $M_i(t) = c_i$ for $t \geq 0$. Indeed, for $(\xi, \theta) \in E$, 
\[ \lambda_i(\xi, \theta) = \left(\theta_i \partial_i \Psi(\xi)\right)^+ \leq |\partial_i \Psi(\xi)| \leq c_i.\]
Algorithm~\ref{alg:general} may be used with $M_i \equiv c_i$ for $i = 1, \dots, d$ at every iteration.

This 
situation arises with heavy-tailed distributions. E.g. if $\pi$ is Cauchy, then $\Psi(\xi) = \log(1+\xi^2)$, and consequently $\lambda(\xi, \theta) = \left(\frac{2 \theta \xi}{1+\xi^2} \right)^+ \leq 1$.

\subsection{Example: negative log density with dominated Hessian}
\label{sec:sampling-dominated-hessian}


Another important case is when there exists a positive definite matrix $Q \in \R^{d \times d}$ which dominates the Hessian $H_{\Psi}(\xi)$ in the positive definite ordering of matrices for every $\xi \in \R^d$. Here $H_{\Psi}(\xi) = (\partial_i \partial_j \Psi(\xi))_{i,j=1}^d$ denotes the Hessian of $\Psi$. 


Denote the Euclidean inner product in $\R^d$ by $\langle \cdot, \cdot\rangle$. For $p \in [1,\infty]$ the $\ell^p$-norm on $\R^d$ and the induced matrix norms are both denoted by $\|\cdot\|_p$. For symmetric matrices $S, T \in \R^{d \times d}$ we write $S \preceq T$ if $\langle v, S v \rangle \leq \langle v, T v \rangle$ for every $v \in \R^d$, or in words, if $T$ dominates $S$ in the positive definite ordering. The key assumption is that $H_{\Psi}(\xi) \preceq Q$ for all $\xi \in \R^d$, where $Q \in R^{d \times d}$ is positive definite.  In particular, if $\|H_{\Psi}(\xi)\|_2 \leq c$ for all $\xi$, then this holds for $Q = c I$.
We let $(e_i)_{i=1}^d$ denote the canonical basis vectors in $\R^d$. 

For an initial value $(\xi, \theta) \in E$, we move along the trajectory $t \mapsto \xi(t):= \xi + \theta t$.
Let $a_i$ denote an upper bound for $\theta_i \partial_i \Psi(\xi)$, $i =1, \dots, d$ and let $b_i := \sqrt{d} \|Q e_i\|_2$.
For general symmetric matrices $S, T$ with $S \preceq T$, we have for any $v, w \in \R^d$ that 
\begin{equation} \label{eq:inner-product-inequality} \langle v, S w \rangle \leq \|v\|_2 \|Sw\|_2 \leq \|v\|_2 \|T w \|_2.\end{equation}
Applying this inequality we obtain for $i = 1, \dots, d$,
\begin{align*}
 \theta_i \partial_i \Psi(\xi(t)) & = \theta_i \partial_i \Psi(\xi)+ \int_0^t \sum_{j=1}^d \partial_i \partial_j \Psi(\xi(s)) \theta_j \ d s \leq a_i + \int_0^t \langle H_{\Psi}(\xi(s)) e_i, \theta \rangle \ d s \\
 & \leq a_i + \int_0^t \|Q e_i\|_2 \|\theta\|_2 \ d s = a_i + b_i t.
\end{align*}
It thus follows that 
\[ \lambda_i(\xi(t), \theta) = \left(\theta_i \partial_i \Psi(\xi(t))\right)^+ \leq (a_i + b_i t)^+.\]
Hence the general Zig-Zag Algorithm may be applied taking 
\[ M_i(t) := (a_i + b_i t)^+, \quad t \geq 0, \quad i =1, \dots, d,\]
with $a_i$ and $b_i$ as specified above.
A complete procedure for Zig-Zag Sampling for a log density with dominated Hessian is  provided in Algorithm~\ref{alg:dominated-hessian}.

\begin{algorithm}
Input: initial condition $(\xi, \theta) \in E$. \\
Output: a sequence of skeleton points $(T^k, \Xi^k, \Theta^k)_{k=0}^{\infty}$. \\
\begin{enumerate}
 \item $(T^0, \Xi^0, \Theta^0) := (0, \xi, \theta)$.
 \item $a_i := \theta_i \partial_i \Psi(\xi)$, $i = 1, \dots, d$.
 \item $b_i := Q e_i \sqrt{d}$, $i=1,\dots, d$.
 \item For $k = 1,2, \dots$
 \begin{enumerate}
  \item Draw $\tau_i$ such that $\P(\tau_i \geq t) = \exp \left( -\int_0^t (a_i + b_i s)^+ \ d s \right)$, $i = 1, \dots, d$.
  \item $i_0 := \argmin_{i \in \{1, \dots, d\}} \tau_i $ and  $\tau := \tau_{i_0}$.
 \item $(T^k, \Xi^k, \Theta^k) := (T^{k-1} + \tau, \Xi^{k-1} + \Theta^{k-1} \tau, \Theta^{k-1})$
   \item $a_i := a_i + b_i \tau$, $i = 1, \dots, d$.
  \item with probability $\frac{\left(\Theta^{k-1}_{i_0} \partial_{i_0} \Psi(\Xi^k)\right)^+}{\left(a_{i_0} \right)^+}$,
  \begin{itemize}
   \item $\Theta^{k} := F_{i_0} [\Theta^{k-1}]$ 
  \end{itemize}
  otherwise
  \begin{itemize}
   \item $\Theta^k := \Theta^{k-1}$.
  \end{itemize}
   \item $a_{i_0} := \Theta^{k-1}_{i_0} \partial_{i_0} \Psi(\Xi^k)$ (re-using the earlier computation)
\end{enumerate}
\end{enumerate}
\caption{Zig-Zag Sampling for log density with dominated Hessian}
\label{alg:dominated-hessian}
\end{algorithm}

\begin{remark}
It is also possibly to apply inequality~\eqref{eq:inner-product-inequality} in such a way as to obtain the estimate 
\[ \langle H_{\Psi}(\xi(s)) e_i, \theta \rangle = \langle e_i, H_{\Psi}(\xi(s)) \theta\rangle \leq \|e_i \|_2 \| Q \theta\|_2 = \| Q \theta\|_2.\]
This requires us to compute $Q \theta$ whenever $\theta$ changes (a computation of $O(d)$).
%
\end{remark}

\section{Big data Bayesian inference by means of error-free sub-sampling}
\label{sec:big}

Throughout this section we assume the derivatives of $\Psi$ admit the representation 
\begin{equation} \label{eq:grad-logdensity-as-average} \partial_i \Psi(\xi) = \frac 1 n \sum_{j=1}^n E^{j}_i(\xi), \quad i = 1, \dots, d, \quad \xi \in \R^d, \end{equation}
with $(E^j)_{j=1}^n$ continuous functions mapping $\R^d$ into $\R^d$.
The motivation for considering such a class of density functions is the problem of sampling from a posterior distribution for big data. 
The key feature of such posteriors is that they can be written as the product of a large number of terms.
For example consider the simplest example of this, where we have $n$ independent data points $(x^j)_{j=1}^n$ and for which the likelihood function is $L(\xi) =\prod_{j=1}^n f(x^j|\xi)$,
for some probability density or probability mass function $f$. In this case we can write the negative log density $\Psi$ associated with the posterior distribution as an average
\begin{equation} \label{eq:logdensity-as-average}
 \Psi(\xi)=\frac{1}{n}\sum_{j=1}^n \Psi^j(\xi), \quad \xi \in \R^d,
\end{equation}
where $\Psi^j(\xi)=-\log\pi_0(\xi)-n\log f(x^j|\xi)$, and we could choose $E_i^j(\xi)=\partial_i \Psi^j(\xi)$. It is crucial that every $E_i^j$ is a factor $O(n)$ cheaper to evaluate than the full derivative $\partial_i \Psi(\xi)$.

We will describe two successive improvements over the basic Zig-Zag Sampling (ZZ) algorithm specifically tailored to the situation in which~\eqref{eq:grad-logdensity-as-average} is satisfied. 
The first improvement consists of a sub-sampling approach where we need calculate only one of the $E_i^js$ at each simulated time, rather than sum of all $n$ of them. This sub-sampling approach (referred to as Zig-Zag with Sub-Sampling, ZZ-SS) comes at the cost of an increased computational bound. Our second improvement is to use control variates to reduce this bound, resulting in the Zig-Zag with Control Variates (ZZ-CV) algorithm.


\subsection{Main idea}
\label{sec:sub-sampling-main-idea}
Let $(\xi(t))_{t \geq 0}$ denote a linear trajectory originating in $(\xi,\theta) \in E$, i.e. $\xi(t) = \xi + \theta t$. Define a collection of switching rates along the trajectory $(\xi(t))$ by 
\[ m_i^j(t) := \left(\theta_i E_i^j(\xi(t))\right)^+, \quad i = 1, \dots, d, \quad j = 1, \dots, n, \quad t \geq 0.\]
We will make use of computational bounds $(M_i)$ as before, which this time bound $(m_i^j)$ uniformly. Let $M_i : \R_+ \rightarrow \R_+$ be continuous and satisfy
\begin{equation} \label{eq:comp-bound-sub-sampling} m_i^j(t) \leq M_i(t) \quad \mbox{for all $i = 1, \dots, d$, $j = 1, \dots, n$, and $t \geq 0$.}\end{equation}
We will generate random times according to the computational upper bounds $(M_i)$ as before. However, we now use a two-step approach to deciding whether to switch or not at the generated times. As before, for $i = 1, \dots, d$ let $(\tau_i)_{i=1}^d$ be simulated random times for which $\P(\tau \geq t) = \exp\left( - \int_0^t M_i(s) \ d s \right)$ and let $i_0 := \argmin_{i \in \{1, \dots, d \}} \tau_i$, and $\tau := \tau_{i_0}$. Then switch component $i_0$ of $\theta$ with probability $m_{i_0}^J(\tau)/ M_{i_0}(\tau)$, where $J \in \{1, \dots, n\}$ is drawn uniformly at random, independent of $\tau$. This `sub-sampling' procedure is detailed in Algorithm~\ref{alg:sub-sampling}. 
Depending on the choice of $E^j_i$, we will refer to this algorithm as Zig-Zag with Sub-Sampling (ZZ-SS, Section~\ref{sec:sub-sampling-global-bound}) or ZZ-CV
(Section~\ref{sec:sub-sampling-control-variates}).

\begin{theorem}
\label{thm:sub-sampling-rate}
Algorithm~\ref{alg:sub-sampling} generates a skeleton of a Zig-Zag process with switching rates given by
\begin{equation} \label{eq:lambda-sub-sampling} \lambda_i(\xi, \theta) = \frac 1 n \sum_{j=1}^n \left(\theta_i E_i^j(\xi)\right)^+,  \quad i = 1, \dots, d, \quad (\xi,\theta) \in E, \end{equation}
and invariant distribution $\mu$ given by~\eqref{eq:inv-measure}.\end{theorem}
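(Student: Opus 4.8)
The plan is to deduce Theorem~\ref{thm:sub-sampling-rate} from two results already at our disposal: Poisson thinning (Proposition~\ref{prop:poisson-thinning}), which will identify the switching rates actually realised by Algorithm~\ref{alg:sub-sampling}, and Theorem~\ref{thm:invariant_measure}, which then supplies the invariant distribution once those rates are known to satisfy Assumption~\ref{ass:invariant-measure}. So the proof splits into ``the algorithm simulates the right rates'' and ``those rates give $\mu$''.

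First I would analyse one skeleton segment. Starting from $(\xi,\theta)$, set $\xi(t) = \xi + \theta t$ and $\bar m_i(t) := \frac{1}{n}\sum_{j=1}^n m_i^j(t) = \frac{1}{n}\sum_{j=1}^n \bigl(\theta_i E_i^j(\xi(t))\bigr)^+$. For each component $i$ the proposed times form a Poisson process of rate $M_i$ along the trajectory, and by~\eqref{eq:comp-bound-sub-sampling} we have $0 \le \bar m_i(t) \le M_i(t)$. The crux is that the randomised acceptance rule is, after conditioning on the trajectory, ordinary thinning at the averaged rate: conditionally on the accepted proposal time $\tau$ in component $i_0$,
\[ \E\bigl[\, m_{i_0}^J(\tau)/M_{i_0}(\tau) \mid \tau \,\bigr] = \frac{1}{n}\sum_{j=1}^n \frac{m_{i_0}^j(\tau)}{M_{i_0}(\tau)} = \frac{\bar m_{i_0}(\tau)}{M_{i_0}(\tau)}, \]
since $J$ is uniform on $\{1,\dots,n\}$ and independent of $\tau$. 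Applying Proposition~\ref{prop:poisson-thinning} componentwise, the accepted events in component $i$ form a Poisson process of rate $\bar m_i$; the first point of the superposition over $i=1,\dots,d$, together with its label $i_0$, is exactly what the algorithm returns, and it has the same law as the first switch in the construction of Section~\ref{sec:construction} for the rates $\lambda_i(\xi,\theta) = \frac{1}{n}\sum_{j=1}^n(\theta_i E_i^j(\xi))^+$. Because every iteration draws fresh independent proposals and a fresh $J$ and then restarts from the current point, I would invoke the strong Markov property, exactly as in Section~\ref{sec:sim}, to concatenate the segments and conclude that $(T^k,\Xi^k,\Theta^k)_{k\ge 0}$ is a skeleton of the Zig-Zag process with rates~\eqref{eq:lambda-sub-sampling} (proposed-but-rejected times appear as redundant skeleton points, which does not affect the reconstructed trajectory).

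Next I would check that these rates fall under Assumption~\ref{ass:invariant-measure} with the $\Psi$ of~\eqref{eq:grad-logdensity-as-average}. Continuity of $\lambda_i$ is immediate from continuity of the $E_i^j$ and of $a \mapsto (a)^+$, and~\eqref{eq:integrability} holds because $e^{-\Psi}$ is proportional to a posterior density. Flipping $\theta_i$ negates the argument of each positive part, so with $a = (a)^+ - (-a)^+$ and~\eqref{eq:grad-logdensity-as-average},
\[ \lambda_i(\xi,\theta) - \lambda_i(\xi, F_i[\theta]) = \frac{1}{n}\sum_{j=1}^n\Bigl[(\theta_i E_i^j(\xi))^+ - (-\theta_i E_i^j(\xi))^+\Bigr] = \theta_i\,\frac{1}{n}\sum_{j=1}^n E_i^j(\xi) = \theta_i\,\partial_i\Psi(\xi), \]
which is~\eqref{eq:relation_lambda}. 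Theorem~\ref{thm:invariant_measure} then yields that $\mu$ given by~\eqref{eq:inv-measure} is invariant for this process.

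The hard part will be the rigorous justification in the second paragraph that drawing $J$ only once per proposed skeleton event, and only in the minimising component, is genuinely equivalent to independent Poisson thinning of each component at the deterministic rate $\bar m_i$; this rests on the superposition and independent-thinning structure of Poisson processes together with the fact that, by the strong Markov property, only the first event is ever needed before restarting. The remaining ingredients — the conditional-expectation identity, the algebra with positive parts, and the appeal to Theorem~\ref{thm:invariant_measure} — are routine.
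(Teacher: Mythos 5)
Your proposal is correct and follows essentially the same route as the paper's own proof: the conditional expectation over the uniform index $J$ reduces the randomised acceptance to thinning at the averaged rate, Proposition~\ref{prop:poisson-thinning} (together with the strong Markov restart argument already set out in Section~\ref{sec:sim}) identifies the effective switching rates as~\eqref{eq:lambda-sub-sampling}, and the positive-part identity verifies~\eqref{eq:relation_lambda} so that Theorem~\ref{thm:invariant_measure} gives the invariant distribution $\mu$. The extra care you take over concatenating segments and checking continuity and integrability is fine but does not change the argument.
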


\begin{proof}
Conditional on $\tau$, the probability that component $i_0$ of $\theta$ is switched at time $\tau$ is seen to be
\[ \mathbb E_J \left[m_{i_0}^J(\tau)/ M_{i_0}(\tau) \right] = \frac{\frac 1 n \sum_{j=1}^n m_{i_0}^j(\tau)}{M_{i_0}(T)} = \frac{m_{i_0}(\tau)}{M_{i_0}(\tau)},\]
where 
\[ m_i(t) := \frac 1 n \sum_{j=1}^n m_i^j(t) = \frac 1 n \sum_{j=1}^n \left(\theta_i E_i^j(\xi(t)) \right)^+, \quad i = 1, \dots, d, \quad t \geq 0.\]
By Proposition~\ref{prop:poisson-thinning} we thus have an effective switching rate $\lambda_i$ for switching the $i$-th component of $\theta$ given by~\eqref{eq:lambda-sub-sampling}. 
Finally we verify that the switching rates $(\lambda_i)$ given by~\eqref{eq:lambda-sub-sampling} satisfy~\eqref{eq:relation_lambda}. Indeed, 
\begin{align*} \lambda_i(\xi, \theta) - \lambda_i(\xi, F_i[\theta]) & = \frac 1 n \sum_{j=1}^n \left\{ \left( \theta_i E_i^j(\xi) \right)^+ -  \left( \theta_i E_i^j(\xi) \right)^- \right\} \\
& = \frac 1 n \sum_{j=1}^n \theta_i E_i^j(\xi) = \theta_i \partial_i \Psi(\xi). 
\end{align*}
By Theorem~\ref{thm:invariant_measure}, the Zig-Zag process has the stated invariant distribution.
\end{proof}

\begin{algorithm}[t]
Input: initial condition $(\xi, \theta) \in E$. \\
Output: a sequence of skeleton points $(T^k, \Xi^k, \Theta^k)_{k=0}^{\infty}$.

\begin{enumerate}
 \item $(T^0, \Xi^0, \Theta^0) := (0, \xi, \theta)$.
 \item for $k = 1, 2, \dots$
 \begin{enumerate}
 \item Define $m_i^j(t) := \left(\Theta^{k-1} E_i^j(\Xi^{k-1} + \Theta^{k-1} t) \right)^+$ for $t \geq 0$, $i = 1, \dots, d$ and $j = 1, \dots, n$.
 \item For $i = 1, \dots, d$, let $(M_i)$ denote computational bounds for $(m_i^j)$, i.e. satisfying~\eqref{eq:comp-bound-sub-sampling}.
 \item Draw $\tau_1, \dots, \tau_d$ such that $\P \left( \tau_i \geq t \right) = \exp \left( - \int_0^t M_i(s) \ d s \right)$.
 \item $i_0 := \argmin_{i=1, \dots, d} \tau_i$ and $\tau  := \tau_{i_0}$.
 \item $(T^k, \Xi^k) := (T^{k-1} + \tau, \Xi^{k-1} + \Theta^{k-1} \tau)$
 \item Draw $J \sim \mathrm{Uniform}(\{1, \dots, n\})$.
 \item With probability $m_{i_0}^J(\tau)/M_{i_0}(\tau)$,
 \begin{itemize}
 \item $\Theta^{k} := F_{i_0} [\Theta^{k-1}]$,
 \end{itemize}
 otherwise
 \begin{itemize}
 \item $\Theta^{k} := \Theta^{k-1}$.
 \end{itemize}
\end{enumerate}
\end{enumerate}
\caption{Zig-Zag with Sub-Sampling (ZZ-SS) / Zig-Zag with Control Variates (ZZ-CV)}
\label{alg:sub-sampling}
\end{algorithm}

The important advantage of using Zig-Zag in combination with sub-sampling is that at every iteration of the algorithm we only have to evaluate a single component of $E^j_i$, 
which reduces algorithmic complexity by a factor $O(n)$. However this may come at a cost. Firstly, the computational bounds $(M_i)$ may have to be increased
which in turn will increase the algorithmic complexity of simulating the Zig-Zag sampler.
Also, the dynamics of the Zig-Zag process will change, because the actual switching rates of the process are increased.
%
This increases the diffusivity of the continuous time Markov process, and affects the mixing properties in a negative way. 

\subsection{Zig-Zag with Sub-Sampling (ZZ-SS) for globally bounded log density gradient }
\label{sec:sub-sampling-global-bound}

A straightforward application of sub-sampling is possible if we have~\eqref{eq:logdensity-as-average}
with $\nabla \Psi^j$ globally bounded, i.e. there exist positive constants $(c_i)$ such that
\begin{equation} \label{eq:sub-sampling-global-bound} |\partial_i \Psi^j(\xi)| \leq c_i, \quad i = 1, \dots, d, \quad j = 1, \dots, n,  \quad \xi \in \R^d.\end{equation}
In this case we may take 
\[ E^j_i := \partial_i \Psi^j \quad \mbox{and} \quad  M_i(t) := c_i, \quad i = 1, \dots, d, \quad j = 1, \dots, n \quad  t \geq 0,\]
so that~\eqref{eq:comp-bound-sub-sampling} is satisfied. The corresponding version of Algorithm~\ref{alg:sub-sampling} will be called Zig-Zag with Sub-Sampling (ZZ-SS).

\subsection{Zig-Zag with Control Variates (ZZ-CV)}
\label{sec:sub-sampling-control-variates}

Suppose again that $\Psi$ admits the representation~\eqref{eq:logdensity-as-average}, and further suppose 
that the derivatives $(\partial_i \Psi^j)$ are globally and uniformly Lipschitz, i.e., there exist constants $(C_i)_{i=1}^n$ such that for some $p \in [1, \infty]$ and all $i = 1, \dots, d$, $j = 1, \dots, n$, and $\xi_1, \xi_2 \in \R^d$,
\begin{equation} \label{eq:sub-sampling-lipschitz-bound} \left| \partial_i \Psi^j(\xi_1) - \partial_i \Psi^j(\xi_2) \right| \leq C_i \| \xi_1 - \xi_2 \|_p.\end{equation}
To use these Lipschitz bounds we need to choose a reference point $\xi^{\star}$ in $\xi$-space, so that we can bound the derivative of the log density based on how close we are to this reference point. 
Now if we choose any fixed reference point, $\xi^{\star} \in \R^d$, we can use a control variate idea to write
\[
 \partial_i\Psi(\xi)=\partial_i\Psi(\xi^{\star})+\frac{1}{n}\sum_{i=1}^n \left[\partial_i\Psi^j(\xi)-\partial_i\Psi^j(\xi^{\star}) \right], \quad \xi \in \R^d, \quad i = 1, \dots, d.
\]
This suggests using
\[
E_i^j(\xi) := \partial_i\Psi(\xi^{\star})+ \partial_i \Psi^j(\xi) - \partial_i \Psi^j(\xi^{\star}), \quad \xi \in \R^d, \quad i = 1, \dots, d, \quad j = 1, \dots, n.
\]
The reason for defining $E_i^j(\xi)$ in this manner is to try and reduce its variability as we vary $j$. By the Lipschitz condition we have $E_i^j(\xi)\leq |\partial_i \Psi(\xi^{\star})| + C_i \| \xi-\xi^{\star} \|_p$, and thus
the variability of the $E_i^j(\xi)$s will be small if 1) the reference point $\xi^{\star}$ is close to the mode of the posterior and 2) $\xi$ is close to $\xi^{\star}$. Under standard asymptotics we expect a draw 
from the posterior for $\xi$ to be $O_p(n^{-1/2})$ from the posterior mode. Thus if we have a procedure for finding a reference point $\xi^{\star}$ which is within $O(n^{-1/2})$ of the posterior mode then this would
ensure $\|\xi-\xi^{\star}\|_2$ is $O_p(n^{-1/2})$ if $\xi$ is drawn from the posterior. For such a choice of $\xi^{\star}$ we would have $\partial_i\Psi(\xi^{\star})$ of $O_p(n^{1/2})$.

Using the Lipschitz condition, we can now obtain computational bounds of $(m_i)$ for a trajectory $\xi(t) := \xi + \theta t$ originating in $(\xi, \theta)$. Define 
\[ M_i(t) := a_i + b_i t, \quad t \geq 0, \quad i = 1, \dots, d,  \]
where $a_i := \left( \theta_i \partial_i\Psi(\xi^{\star})\right)^+ + C_i \| \xi - \xi^{\star}\|_p$ and $b_i := C_i d^{1/p}$. Then~\eqref{eq:comp-bound-sub-sampling} is satisfied. Indeed, using Lipschitz continuity of $y \mapsto (y)^+$,
\begin{align*}
 m_i^j(t) & = \left( \theta_i E_i^j(\xi + \theta t) \right)^+ = \left( \theta_i\partial_i\Psi(\xi^{\star})+\theta_i \partial_i \Psi^j(\xi + \theta t) - \theta_i \partial_i \Psi^j(\xi^{\star}) \right)^+ \\
 & \leq \left( \theta_i \partial_i\Psi(\xi^{\star})\right)^+ +\left|\partial_i \Psi^j(\xi) -\partial_i \Psi^j(\xi^{\star}) \right| + \left| \partial_i \Psi^j(\xi + \theta t) - \partial_i \Psi^j(\xi) \right| \\
 & \leq \left( \theta_i \partial_i\Psi(\xi^{\star})\right)^+ + C_i \left( \|\xi - \xi^{\star}\|_p + t \| \theta\|_p \right) = M_i(t).
%
\end{align*}

Implementing this scheme requires some pre-processing of the data. First we need a way of choosing a suitable reference point $\xi^{\star}$ to find a value close 
to the mode using an approximate or exact numerical optimization routine. The complexity of this operation will be $O(n)$.
Once we have found such a reference point we have an one-off $O(n)$ cost of calculating $\partial_i\Psi(\xi^{\star})$ for each $i=1,\ldots,d$. 
However, once we have paid this upfront computational cost, the resulting Zig-Zag sampler can be super-efficient. This is discussed in more detail in Section~\ref{sec:scaling}, and 
demonstrated empirically in Section~\ref{sec:examples}.
The version of Algorithm~\ref{alg:sub-sampling} resulting from this choice of $E^j_i$ and $M_i$ will be called Zig-Zag with Control Variates (ZZ-CV).



\begin{remark} 
When choosing $p \geq 1$, there will be a trade-off between the magnitude of $C_i$ and of $\| \xi -\xi^{\star}\|_p$, which may influence the scaling of Zig-Zag sampling with dimension. We will see in Section~\ref{sec:gaussian} that for i.i.d. Gaussian components, the choice $p = \infty$ is optimal. When the situation is less clear, choosing the Euclidean norm ($p = 2$) is a reasonable choice.
\end{remark}

\section{Scaling analysis}
\label{sec:scaling}

In this section we provide an informal scaling argument for canonical Zig-Zag, and Zig-Zag with control variates and sub-sampling.
For the moment fix $n \in \N$ and consider a posterior with negative log density
\[
 \Psi(\xi)=-\sum_{j=1}^n \log f(x^j \mid \xi),
\]
where $x^j$ are i.i.d. drawn from $f(x^j \mid \xi_0)$. 
Let $\widehat{\xi}$ denote the maximum likelihood estimator (MLE) for $\xi$ based on data $x^1,\ldots,x^n$. Introduce the coordinate transformation
\[
 \phi(\xi) =\sqrt{n}(\xi-\widehat{\xi}), \quad  \xi(\phi) = \frac 1{\sqrt n} \phi + \widehat{\xi}.
\]

As $n\rightarrow\infty$ the posterior distribution in terms of $\phi$ will converge to a multivariate Gaussian distribution with mean 0 and covariance matrix given by the inverse of the expected information $i(\theta_0)$; see e.g. \cite{Johnson1970}.


\subsection{Scaling of Zig-Zag Sampling (ZZ)}
\label{sec:scaling-ZZ}

First let us obtain a Taylor expansion of the switching rate for $\xi$ close to $\widehat \xi$. We have
\begin{align*}
 & \partial_{\xi_i} \Psi(\xi) = - \partial_{\xi_i} \sum_{j=1}^n \log f(x^j \mid \xi) \\
 & = \underbrace{- \partial_{\xi_i} \sum_{j=1}^n \log f(x^j \mid \widehat \xi)}_{= 0} - \sum_{j=1}^n  \sum_{k=1}^d \partial_{\xi_i}\partial_{\xi_k} \log f(x^j \mid \widehat \xi) (\xi_k - \widehat \xi_k) +  O( \|\xi - \widehat \xi\|^2).
\end{align*}
The first term vanishes by the definition of the MLE.
Expressed in terms of $\phi$, the switching rates are
\begin{align*} (\theta_i \partial_{\xi_i} \Psi(\xi(\phi)))^+
& =\underbrace{\frac 1 {\sqrt{n}} \left( - \sum_{j=1}^n \sum_{k=1}^d \partial_{\xi_i} \partial_{\xi_k} \log f(x^j \mid \widehat \xi) \phi_k \right)^+}_{O(\sqrt{n})} + O\left( \frac{\| \phi\|^2}{n} \right).
%
\end{align*}
With respect to the coordinate $\phi$, the canonical Zig-Zag process has constant speed $\sqrt{n}$ in each coordinate, and by the above computation, a switching rate of $O(\sqrt{n})$. After a rescaling of the time parameter by a factor $\sqrt{n}$, the process in the $\phi$-coordinate becomes a Zig-Zag process with unit speed in every direction and switching rates
\[ \left( - \frac 1 n \sum_{j=1}^n \sum_{k=1}^d \partial_{\xi_i} \partial_{\xi_k} \log f(x^j \mid \xi) \phi_k \right)^+ + O(n^{-1/2}).\]
If we let $n \rightarrow \infty$, the switching rates converge almost surely to those of a Zig-Zag process with switching rates
\[ \widetilde \lambda_i(\phi, \theta) = \left(\theta_i (i(\theta_0) \phi)_i \right)^+ \]
where $i(\theta_0)$ denotes the expected information. These switching rates correspond to the limiting Gaussian distribution with covariance matrix $(i(\theta_0))^{-1}$.

In this limiting Zig-Zag process, all dependence on $n$ has vanished. Starting from equilibrium, we require a time interval of $O(1)$ (in the rescaled time) to obtain an essentially independent sample. In the original time scale this corresponds to a time interval of $O(n^{-1/2})$. As long as the computational bound in the Zig-Zag algorithm is  $O(n^{1/2})$, this can be achieved using $O(1)$ proposed switches. The computational cost for every proposed switch is $O(n)$, because the full data $(x^i)_{i=1}^n$ needs to be processed in the computation of the true switching rate at the proposed switching time.

\emph{We conclude that the computational complexity of the Zig-Zag (ZZ) algorithm per independent sample is $O(n)$, provided that the computational bound is $O(n^{1/2})$.}
This is the best we can expect for any standard Monte Carlo algorithm (where we will have a $O(1)$ number of iterations, but each iteration is $O(n)$ in computational cost).

To compare, if the computational bound is $O(n^{\alpha})$ for some $\alpha > 1/2$, then we require $O(n^{\alpha - 1/2})$ proposed switches before we have simulated a total time interval of length $O(n^{-1/2})$, so that, with a complexity of $O(n)$ per proposed switching time, the Zig-Zag algorithm has total computational complexity $O(n^{\alpha + 1/2})$.
So, for example, with global bounds we have that the computational bound is $O(n)$ (as each term in the log density is $O(1)$), and hence ZZ will have total computational complexity of $O(n^{3/2})$.

\begin{example}[Dominated Hessian]
Consider Algorithm~\ref{alg:dominated-hessian} in the one-dimensional case, with the second derivative of $\Psi$ bounded from above by $Q > 0$. We have $Q = O(n)$ as $\Psi''$ is the sum of $n$ terms of $O(1)$. The value of $b$ is kept fixed at the value $b = Q = O(n)$. Next $a$ is given initially as 
\begin{align*} a = \theta \Psi'(\xi) \leq \theta \underbrace{\Psi'(\widehat \xi)}_{= 0} + \underbrace{Q}_{O(n)} \underbrace{(\xi - \widehat \xi)}_{O(n^{-1/2})} = O(n^{1/2}),
\end{align*}
and increased by $b \tau$ until a switch happens and $a$ is reset to $\theta \Psi'(\xi)$. Because of the initial value for $a$, switches will occur at rate $O(n^{1/2})$ so that $\tau$ will be $O(n^{-1/2})$, and the value of $a$ will remain $O(n^{1/2})$. Hence the magnitude of the computational bound $M(t) = (a + b t)^+$ is $O(n^{1/2})$.
\end{example}

\subsection{Scaling of Zig-Zag with Control Variates (ZZ-CV)}
\label{sec:scaling-ZZCV}

Now we will study the limiting behaviour as $n \rightarrow \infty$ of 
ZZ-CV introduced in Section~\ref{sec:sub-sampling-control-variates}. In determining the computational bounds we take $p = 2$ for simplicity, e.g. in~\eqref{eq:sub-sampling-lipschitz-bound}.
Also for simplicity assume that $\xi \mapsto \partial_{\xi_i} \log f(x^j \mid \xi)$ has Lipschitz constant $k_i$ (independent of $j = 1, \dots, n$) and write $C_i = n k_i$, so that~\eqref{eq:sub-sampling-lipschitz-bound} is satisfied. 
In practice there may be a logarithmic increase with $n$ in the Lipschitz constants $k_i$ as we have to take a global bound in $n$. For the present discussion we ignore such logarithmic factors.
We assume reference points $\xi^{\star}$ for growing $n$ are determined in  
such a way that $\|\xi^{\star}-\widehat{\xi}\|_2$ is $O(n^{-1/2})$. For definiteness, suppose there exists a $d$-dimensional random variable $Z$ such that $n^{1/2} (\xi^{\star} - \widehat \xi) \rightarrow Z$ in distribution, with the randomness in $Z$ independent of $(x^j)_{j=1}^{\infty}$.

We can look at ZZ-CV with respect to the scaled coordinate $\phi$ as $n\rightarrow \infty$.  Denote the reference point for the rescaled parameter as $\phi^{\star} :=\sqrt{n}(\xi^{\star}-\widehat{\xi})$.

The essential quantities to consider are the switching rate estimators $E_i^j$. We estimate
\begin{align*}
|E_i^j(\xi)| & = \left| \partial_{\xi_i}\Psi(\xi^{\star})+ \partial_{\xi_i} \Psi^j(\xi) - \partial_{\xi_i} \Psi^j(\xi^{\star}) \right|  \\
& = \left| \partial_{\xi_i}\Psi(\xi^{\star}) - \partial_{\xi_i} \Psi(\widehat \xi) + \partial_{\xi_i} \Psi^j(\xi) - \partial_{\xi_i} \Psi^j(\xi^{\star}) \right| \\
& \leq \underbrace{C_i}_{O(n)} \underbrace{\|  \xi^{\star} - \widehat \xi \|}_{O(n^{-1/2})} + \underbrace{C_i}_{O(n)} \underbrace{\| \xi - \xi^{\star} \|}_{O(n^{-1/2})}.
\end{align*}
We find that $|E_i^j(\xi)| = O(n^{1/2})$ under the stationary distribution.

By slowing down the Zig-Zag process in $\phi$ space by $\sqrt{n}$, the continuous time process generated by ZZ-CV will approach a limiting Zig-Zag process with a certain switching rate of $O(1)$. In general this switching rate will depend on the way that $\xi^{\star}$ is obtained. To simplify the exposition, in the following computation we assume $\xi^{\star} = \widehat \xi$.
Rescaling by $n^{-1/2}$, and developing a Taylor approximation around $\widehat \xi$, 
\begin{align*}
 n^{-1/2} E_i^j(\xi) & = n^{-1/2} \left(\partial_{\xi_i} \Psi^j(\xi) - \partial_{\xi_i} \Psi^j(\widehat \xi)  \right) \\
 & = n^{-1/2} \left( - n \partial_{\xi_i} \log f(x^j \mid \xi) + n \partial_{\xi_i} \log f(x^j \mid \widehat \xi) \right) \\
 & = -n^{1/2} \left( \sum_{k=1}^d  \partial_{\xi_i} \partial_{\xi_k} \log f(x^j \mid \widehat \xi) (\xi_k - \widehat \xi_k) \right) + O(n^{1/2}\| \xi - \widehat \xi \|^2) \\
 & = - \sum_{k=1}^d \partial_{\xi_i} \partial_{\xi_k} \log f(x^j \mid \widehat \xi)  \phi_k + O(n^{-1/2}).
 \end{align*}
By Theorem~\ref{thm:sub-sampling-rate}, the rescaled effective switching rate for ZZ-CV is given by
\begin{align*}
 \widetilde \lambda_i(\phi, \theta) & := n^{-1/2} \lambda_i(\xi(\phi),\theta) = \frac 1 {n^{3/2}} \sum_{j=1}^n \left( \theta_i E_i^j (\xi(\phi)) \right)^+ \\
 & = \frac 1 n \sum_{j=1}^n \left(- \theta_i \sum_{k=1}^d \partial_{\xi_i} \partial_{\xi_k} \log f(x^j \mid \widehat \xi) \phi_k \right)^+ + O(n^{-1/2}) \\
 & \rightarrow \E \left( - \theta_i \sum_{k=1}^d \partial_{\xi_i} \partial_{\xi_k} \log f(X \mid \xi_0 ) \phi_k \right)^+,
\end{align*}
where $\E$ denotes expectation with respect to $X$, with density $f(\cdot \mid \xi_0)$, and the convergence is a consequence of the law of large numbers. If $\xi^{\star}$ is not exactly equal to $\widehat \xi$, the limiting form of $\widetilde \lambda_i(\phi,\theta)$ will be different, but the important point is that it will be $O(1)$, which follows from the bound on $|E^j_i|$ above.

Just as with ZZ, the rescaled Zig-Zag process underlying ZZ-CV converges to a limiting Zig-Zag process with switching rate $\widetilde \lambda_i(\phi, \theta)$.
Since the computational bounds of ZZ-CV are $O(n^{1/2})$, a completely analogous reasoning to the one for ZZ algorithm above (Section~\ref{sec:scaling-ZZ}) leads to the conclusion that $O(1)$ proposed switches are required to obtain an independent sample. However, in contrast with the ZZ-algorithm, the ZZ-CV algorithm is designed in such a way that the computational cost per proposed switch is $O(1)$.

\emph{We conclude that the computational complexity of the ZZ-CV algorithm is $O(1)$ per independent sample.} This provides a factor $n$ increase in efficiency over standard MCMC algorithms, resulting in an \emph{asymptotically unbiased} algorithm for which \emph{the computational cost of obtaining an independent sample does not depend on the size of the data}.

\subsection{Remarks}


The arguments above assume we are at stationarity -- and how quickly the two algorithms converge is not immediately clear. 
Note however that for sub-sampling Zig-Zag it is possible to choose the reference point $\xi^{\star}$ as starting point, thus avoiding much of the issues about convergence.



In some sense, the good computational scaling of ZZ-CV is leveraging the asymptotic normality of the posterior, but in such a way that ZZ-CV always samples from the true posterior. 
Thus when the posterior is close to Gaussian it will be quick; when it is far from Gaussian it may well be slower but will still be ``correct''. 
This is fundamentally different from other algorithms \cite[e.g.][]{Neiswanger:2013,Scott:2013,Bardenet2015} that utilise the asymptotic normality in terms of justifying their approximation to the posterior.
Such algorithms are accurate if the posterior is close to Gaussian, but may be inaccurate otherwise, and it is often impossible to quantify the size of the approximation in practice.



\section{Examples and experiments}
\label{sec:examples}

\subsection{Sampling and integration along Zig-Zag trajectories}


There are essentially two different ways of using the Zig-Zag skeleton points which we obtain by using e.g. Algorithms~\ref{alg:general}, \ref{alg:dominated-hessian}, or \ref{alg:sub-sampling}.

The first possible approach is to collect a number of samples along the trajectories. Suppose we have simulated the Zig-Zag process up to time $\tau > 0$, and we wish to collect $m$ samples. 
This can be achieved by setting $t_i = i \tau/m$, and setting $\Xi_i := \Xi(t_i)$ for $i = 1, \dots, m$, with the continuous time trajectory $(\Xi(t))$ defined as in Section~\ref{sec:construction}. 
In order to approximate $\pi(f)$ numerically for some function $f : \R^d \rightarrow \R$ of interest, we can use the usual ergodic average
\[ \widehat{\pi(f)} :=  \frac 1 m  \sum_{i=1}^m f(\Xi_i).\]
We can also estimate posterior quantiles by using the quantiles of the sample $\Xi_1,\ldots,\Xi_m$, as with standard MCMC output.
An issue with this approach is that we have to decide on the number, $m$, of samples we wish to use. Whilst the more samples we use the greater the accuracy of our approximation
to $\pi(f)$, this comes at an increased computational and storage cost. The trade-off in choosing an appropriate value for $m$ is equivalent to the choice of how much to thin output from a standard MCMC algorithm.


It is important that one does not make the mistake of using the switching points of the Zig-Zag process as samples, as these points are not distributed according to $\pi$. 
In particular, the switching points are biased towards the tails of the target distribution.

An alternative approach is intrinsically related to the continuous time and piecewise linear nature of the Zig-Zag trajectories. This approach consists of continuous 
time integration of the Zig-Zag process.  
By the continuous time ergodic theorem, for $f$ as above, $\pi(f)$ can be estimated as
\[ \widehat {\pi(f)} = \frac 1 {\tau} \int_0^{\tau} f(\Xi(s)) \ d s.\]
Since the output of the Zig-Zag algorithms consists of a finite number of skeleton points $(T^i, \Xi^i, \Theta^i)_{i=0}^k$, we can express this as
\[ \widehat {\pi(f)} =  \frac 1 {T^k} \sum_{i=1}^k \int_{T^{i-1}}^{T^i} f(\Xi^{i-1} + \Theta^{i-1} (s-T^{i-1})) \ d s.\]
Due to the piecewise linearity of $\Xi(t)$, in many cases these integrals can be computed exactly, e.g. for the moments, $f(x) = x^p$, $p \in \R$.
In cases where the integral can not be computed exactly, numerical quadrature rules can be applied.
An advantage of this method is that we do not have to make an arbitrary decision on the number of samples to extract from the trajectory.

\subsection{Beating one ESS per epoch}
\label{sec:ess-per-epoch}

We use the term \emph{epoch} as a unit of computational cost, corresponding to the number of iterations required to evaluate the complete gradient of $\log \pi$. This means that for the basic Zig-Zag algorithm (without sub-sampling), an epoch consists of exactly one iteration, and for the sub-sampled variants of the Zig-Zag algorithm, an epoch consists of $n$ iterations. The CPU running times per epoch of the various algorithms we consider are equal up to a constant factor. To assess the scaling of various algorithms, we use \emph{ESS per epoch}. The notion of ESS is discussed in the supplementary material \cite[Section 2]{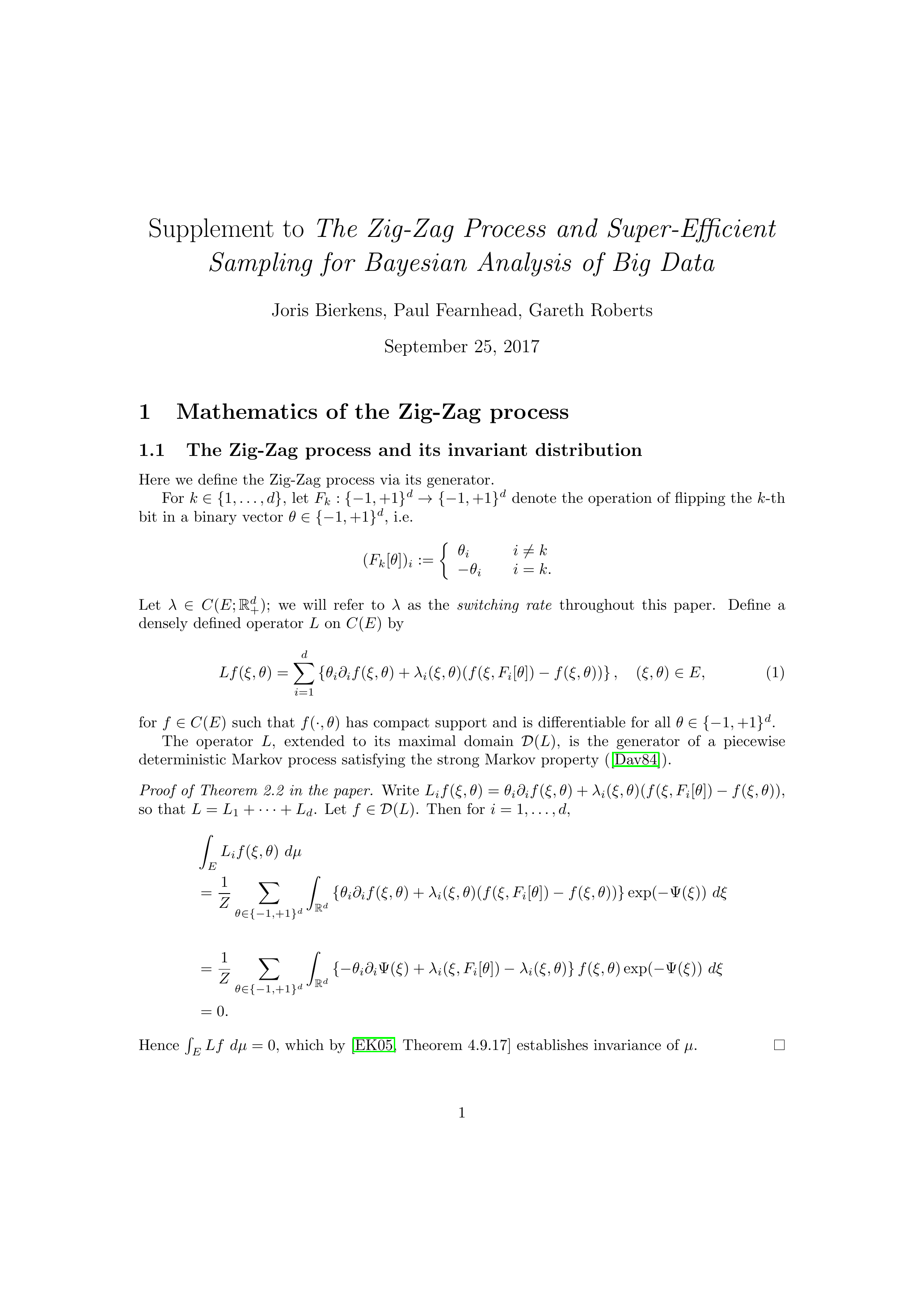}. Consider any classical MCMC algorithm based upon the Metropolis-Hastings acceptance rule. Since every iteration requires an evaluation of the full density function to compute the acceptance probability, we have that the ESS per epoch for such an algorithm is bounded from above by one. Similar observations apply to all other known MCMC algorithms capable of sampling asymptotically from the exact target distribution.

There do exist several conceptual innovations based on the idea of sub-sampling, which have some theoretical potential to overcome the fundamental limitation of one ESS per epoch sketched above. 


The Pseudo-Marginal Method (PMM, \cite{AndrieuRoberts2009}) is based upon using a positive unbiased estimator for a possibly unnormalized density. Obtaining an unbiased estimator of a product is much more difficult than obtaining one for a sum. Furthermore,
it has been shown to be impossible to construct an estimator that is guaranteed to be positive without other information about the product, such as a bound on the terms in the product (\cite{Jacob:2015}).
Therefore the PMM does not apply in a straightforward way to vanilla MCMC in Bayesian inference.

In the supplementary material \cite[Section 3]{supplement} we analyse the scaling of Stochastic Gradient Langevin Dynamics (SGLD, \cite{WellingTeh2011}) in an analogous fashion to the analysis of ZZ and ZZ-CV in Section~\ref{sec:scaling}. From this analysis we conclude that it is in general not possible to implement SGLD in such a way that the ESSpE has a larger order of magnitude than $O(1)$. We compare SGLD to Zig-Zag in experiments of Sections~\ref{sec:gaussian} and~\ref{sec:nonidentifiable}.

\subsection{Mean of a Gaussian distribution}
\label{sec:gaussian}

Consider the illustrative problem of estimating the mean of a Gaussian distribution. This problem has the advantage that it allows for an analytical solution which can be compared with the numerical solutions obtained by Zig-Zag Sampling and other methods. Conditional on a one-dimensional parameter $\xi$, the data is assumed to be i.i.d. from $N(\xi, \sigma^2)$. Furthermore a $N(0, 1/\rho^2)$ prior on $\xi$ is specified. Data are generated from the true distribution $N(\xi_0, \sigma^2)$ for some fixed $\xi_0$.
For a detailed description of the experiment and computational bounds, see Section 4 of the supplementary material.


In this experiment, we compare the mean square error (MSE) for several algorithms, namely basic Zig-Zag (ZZ), Zig-Zag with Control Variates (ZZ-CV), Zig-Zag with 
Control Variates  with a ``sub-optimal'' reference point (ZZ-soCV), and Stochastic Gradient Langevin Dynamics (SGLD). SGLD is implemented with fixed step size, as is usually done in practice, see e.g. \cite{Vollmer2015}, with the added benefit that it makes the comparison with the Zig-Zag algorithms more straightforward. Here in basic Zig-Zag we pretend that every iteration requires the evaluation of $n$ observations (whereas in practice, we can pre-compute $\xi^{\mathrm{MAP}}$). 
 
Results for this experiment are displayed in Figure~\ref{fig:gaussian}. The MSE for the second moment using SGLD does not decrease beyond a fixed value, indicating the presence of bias in SGLD. This bias does not appear in the different versions of Zig-Zag sampling, 
agreeing with the theoretical result that ergodic averages over Zig-Zag trajectories are consistent. 
Furthermore we see a significant relative increase in efficiency for ZZ-(so)CV over basic ZZ when the number of observations is increased, agreeing with the scaling results of Section~\ref{sec:scaling}. A poor choice of reference point (as in ZZ-soCV) is seen to have only a small effect on the efficiency.


\begin{figure}
 \begin{subfigure}[b]{0.45\textwidth}
  \includegraphics[width=\textwidth]{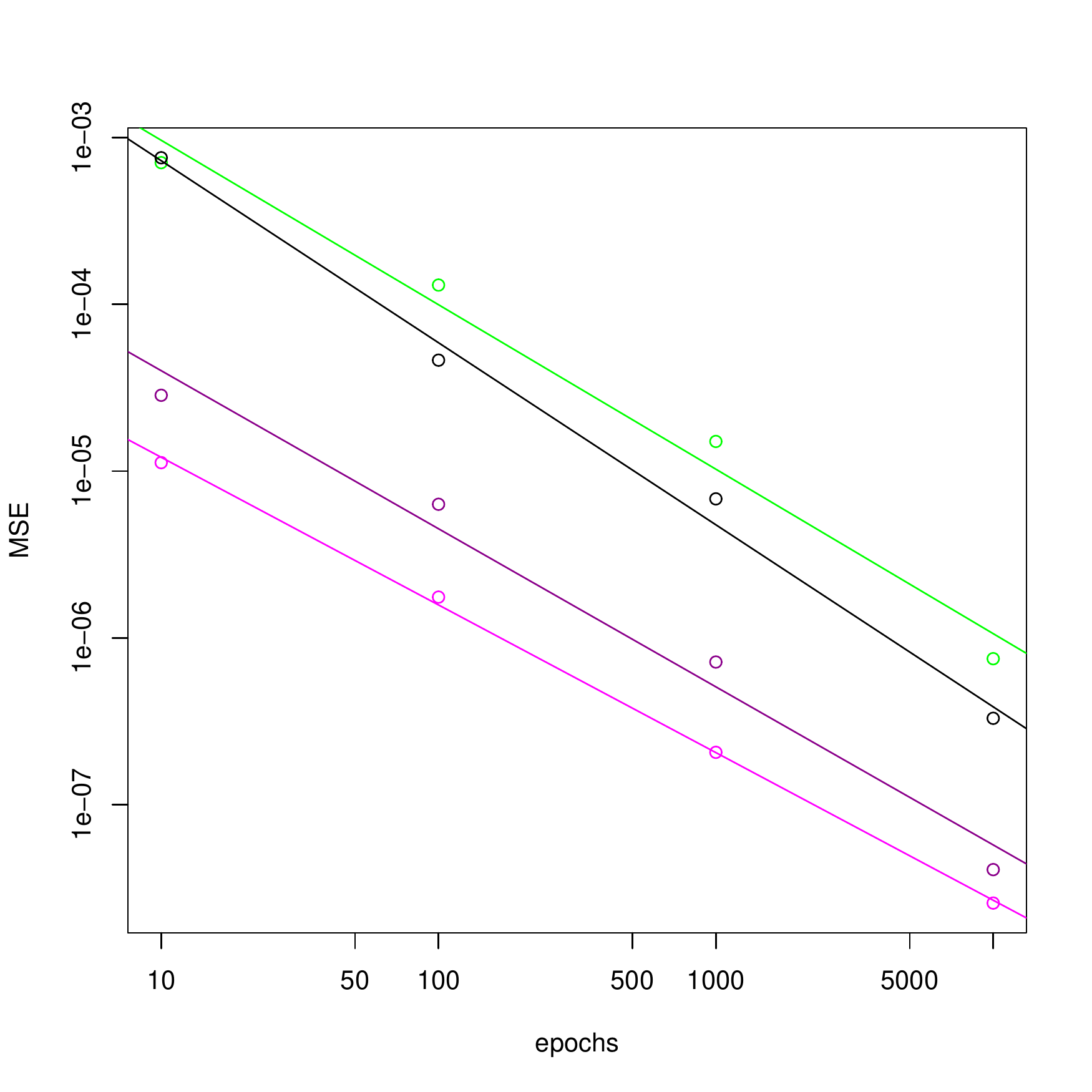}
  \caption{First moment, 100 observations}
 \end{subfigure}
  \begin{subfigure}[b]{0.45\textwidth}
  \includegraphics[width=\textwidth]{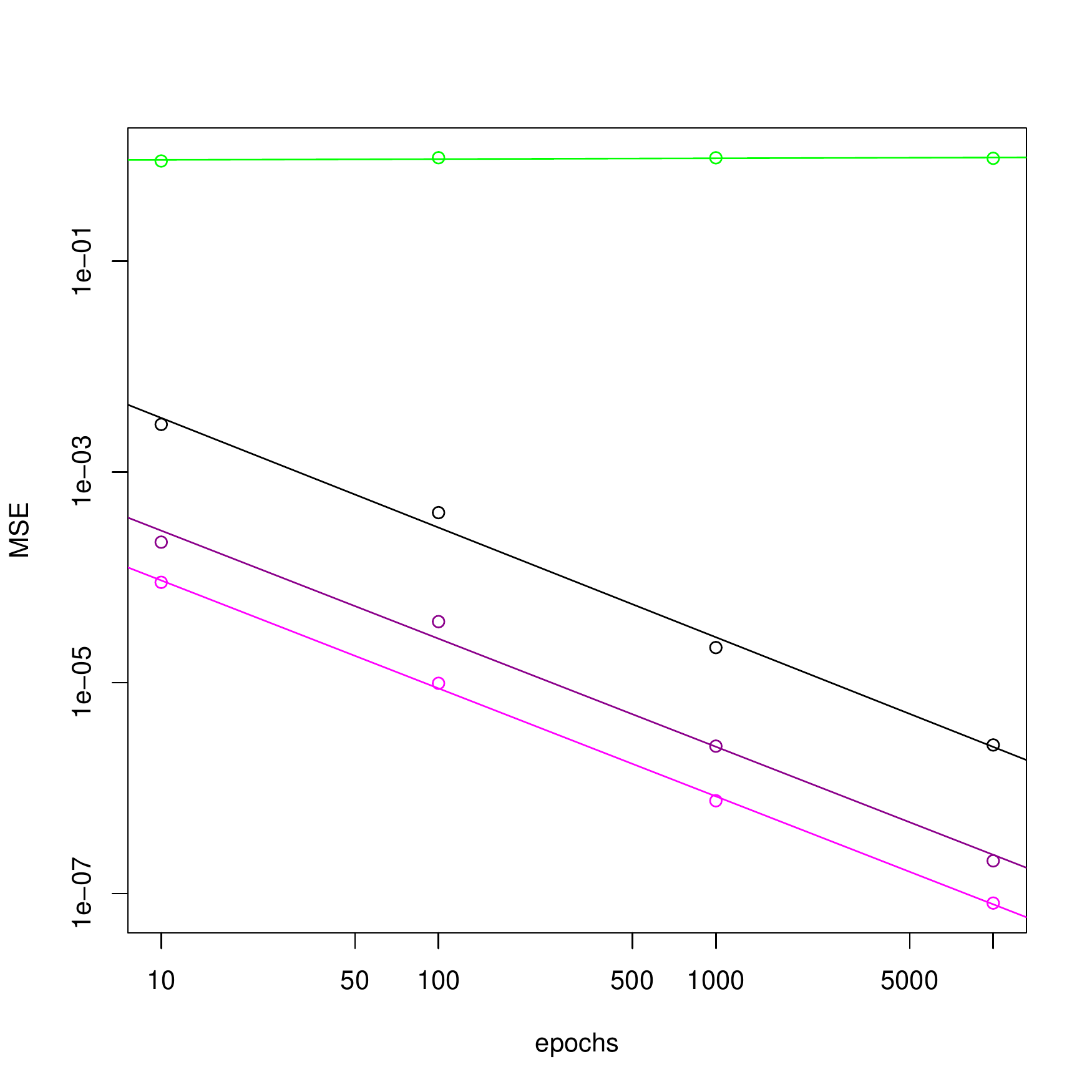}
  \caption{Second moment, 100 observations}
 \end{subfigure} \\
  \begin{subfigure}[b]{0.45\textwidth}
  \includegraphics[width=\textwidth]{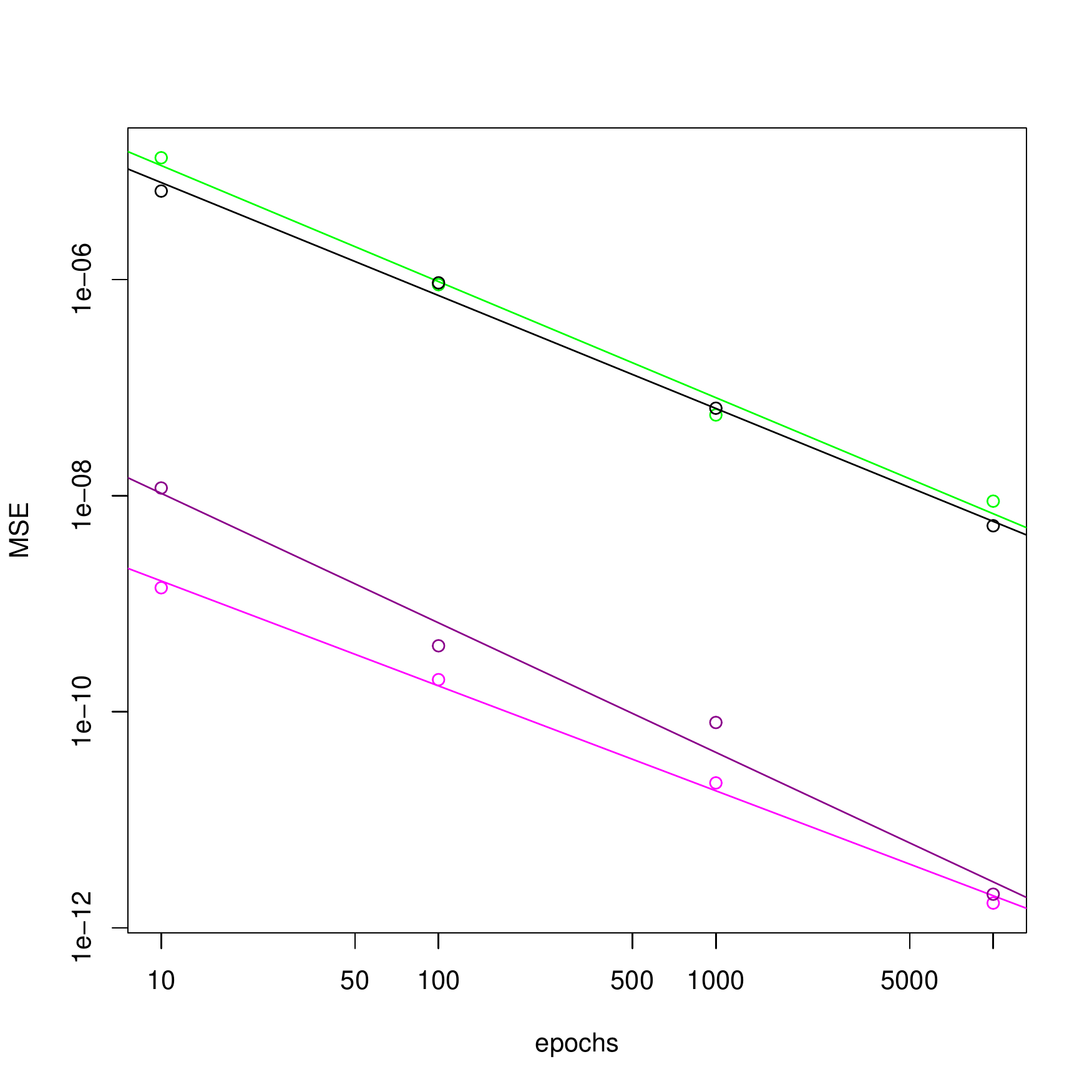}
  \caption{First moment, $10^4$ observations}
 \end{subfigure}
  \begin{subfigure}[b]{0.45\textwidth}
  \includegraphics[width=\textwidth]{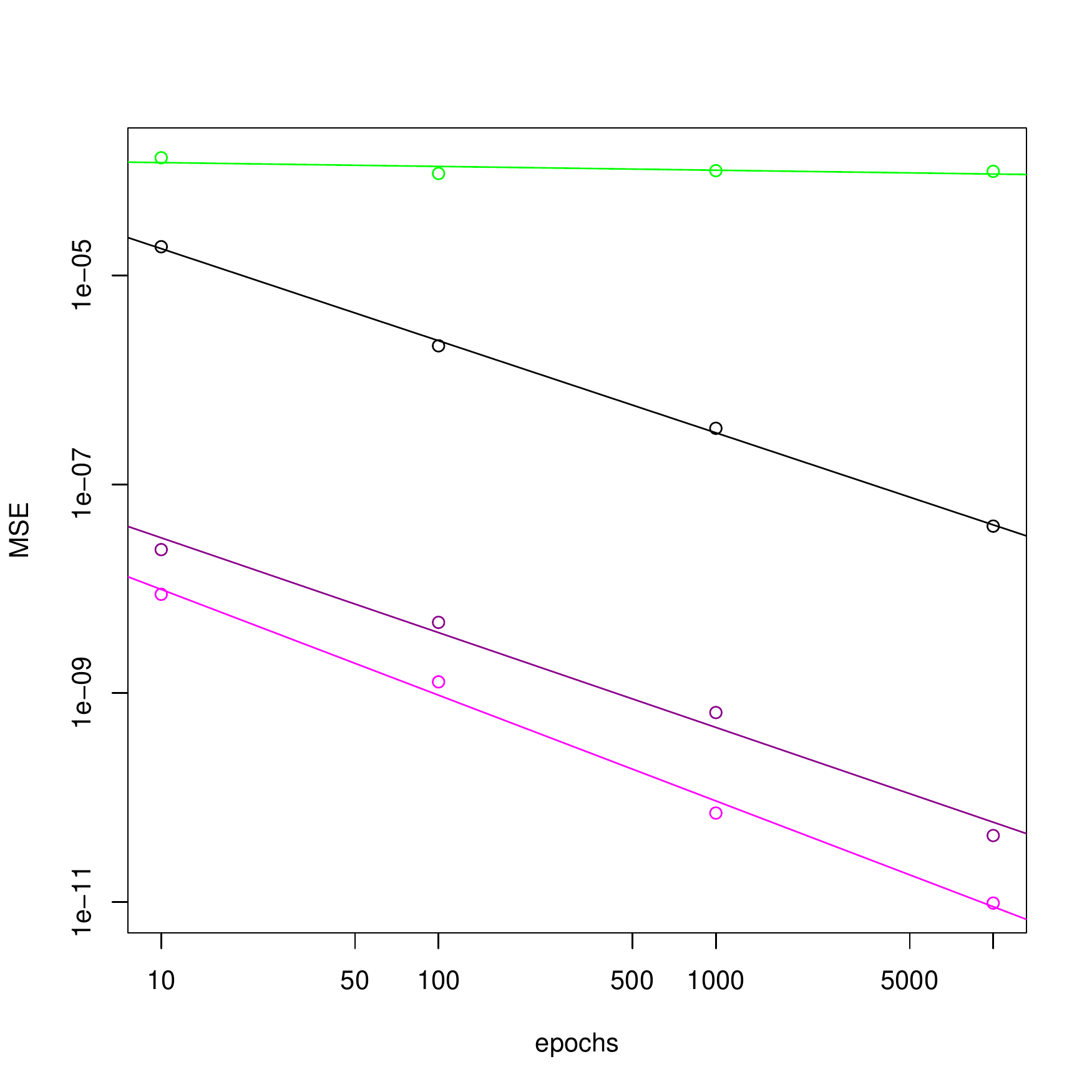}
  \caption{Second moment, $10^4$ observations}
  \end{subfigure}
 
\caption[LoF variant]{Mean square error (MSE) in the first and second moment as a function of the number of epochs, based on $n = 100$  or $n = 10,000$ observations, 
 for a one-dimensional Gaussian posterior distribution (Section~\ref{sec:gaussian}). Displayed are SGLD (green), ZZ-CV (magenta), ZZ-soCV (dark magenta), ZZ (black). 
  The displayed dots represent averages over experiments based on randomly generated data from the true posterior distribution. 
    Parameter values (see \cite[Section 4]{supplement}) are $\xi_0 = 1$ (the true value of the mean parameter), $\rho = 1$, $\sigma = 1$ and $c_1 = 1$, $c_2 = 1/100$ (for the SGLD parameters, see the supplement, \cite[Section 3]{supplement}). 
The value of $\xi^{\star}$ for ZZ-soCV is based on a sub-sample of size $m = n/10$ so that it will not be equal to the exact maximizer of the posterior. For an honest comparison, trajectories of all algorithms have initial condition equal to $\xi^{\mathrm{MAP}}$.}
 \label{fig:gaussian}
\end{figure}

\subsection{Logistic regression}
\label{sec:logistic}



In this numerical experiment we compare how the ESS per epoch (ESSpE) and ESS per second grow with the number of observations $n$ for several Zig-Zag algorithms and the MALA algorithm when applied to a logistic regression problem. Conditional on a $d$-dimensional parameter $\xi$ and given $d$-dimensional covariates $x^j \in \R^d$, where $j = 1, \dots, n$, and with $x_1^j = 1$ for all $j$, the binary variable $y^j \in \{0,1\}$ has distribution
\[ \P(y^j \mid x_1^j, \dots, x_d^j, \xi_1, \dots, \xi_d) = \frac 1 { 1 + \exp \left( - \sum_{i=1}^d \xi_i x_i \right)}.\]
Combined with a flat prior distribution, this induces a posterior distribution $\xi$ given observations of $(x^j, y^j)$ for $j =1, \dots, n$; see the supplementary material for implementational details \cite[Section 5]{supplement}.

The results of this experiment are shown in Figure~\ref{fig:ess-per-epoch}. In both the plots of ESS per epoch (see (a) and (c)), the best linear fit for ZZ-CV has slope approximately 0.95, 
which is in close agreement with the scaling analysis of Section~\ref{sec:scaling}. 
The other algorithms have roughly a horizontal slope, corresponding to a linear scaling with the size of the data. We conclude that, among the algorithms tested, ZZ-CV is the only algorithm for which the ESS per CPU second is approximately constant
as a function of the size of the data (see Figure~\ref{fig:ess-per-epoch}, (b) and (d)). Furthermore ZZ-CV obtains an ESSpE which is roughly linearly increasing with the number of observations $n$ (see Figure~\ref{fig:ess-per-epoch},(a) and (c)). 
whereas the other versions of the Zig-Zag algorithms, and MALA, have an ESSpE which is approximately constant with respect to $n$. 
These statements apply regardless of the dimensionality of the problem.

\begin{figure}
\begin{subfigure}[b]{0.45 \textwidth}
 \includegraphics[width = \textwidth]{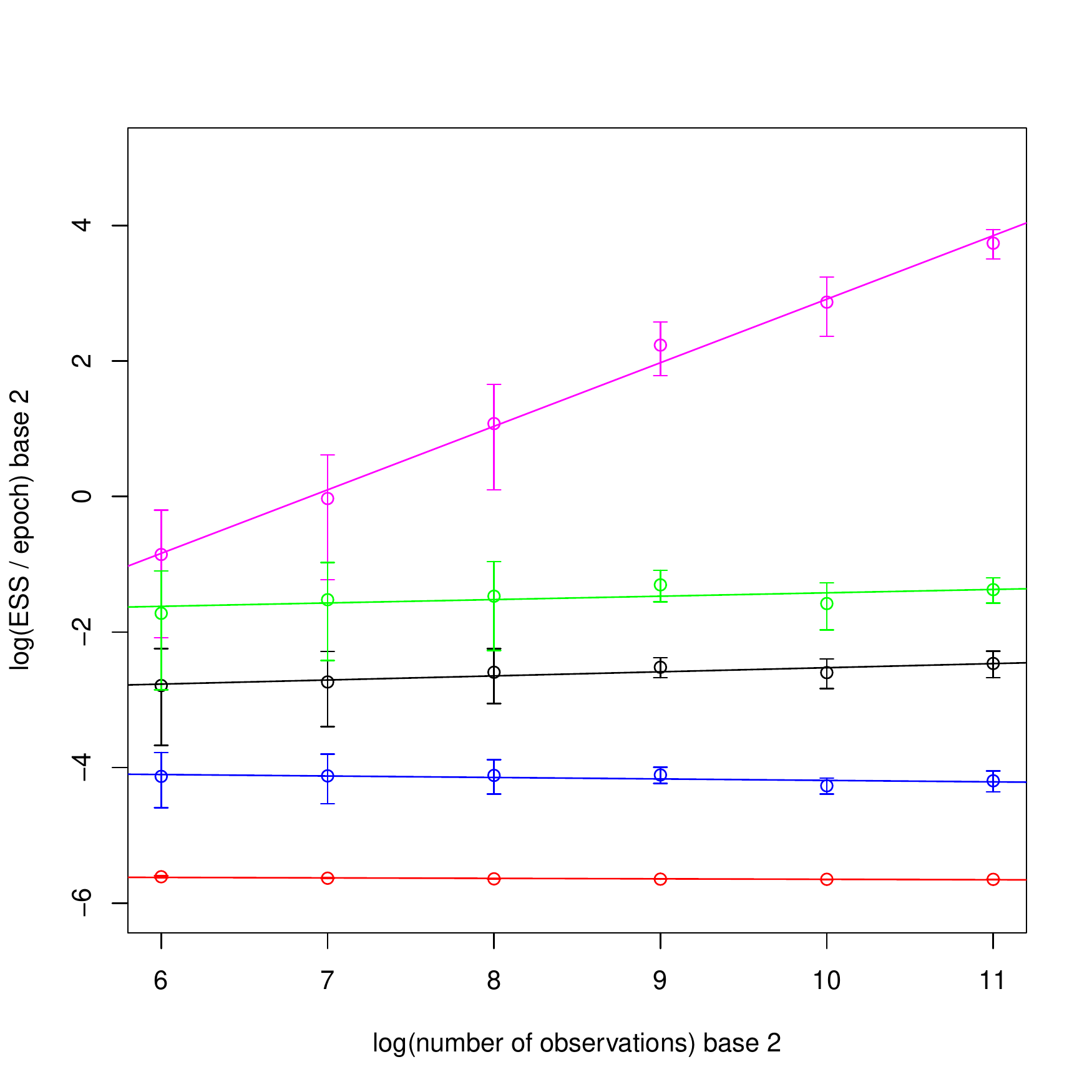}
 \caption{ESS per epoch, 2 dimensions}
\end{subfigure}
\begin{subfigure}[b]{0.45 \textwidth}
 \includegraphics[width = \textwidth]{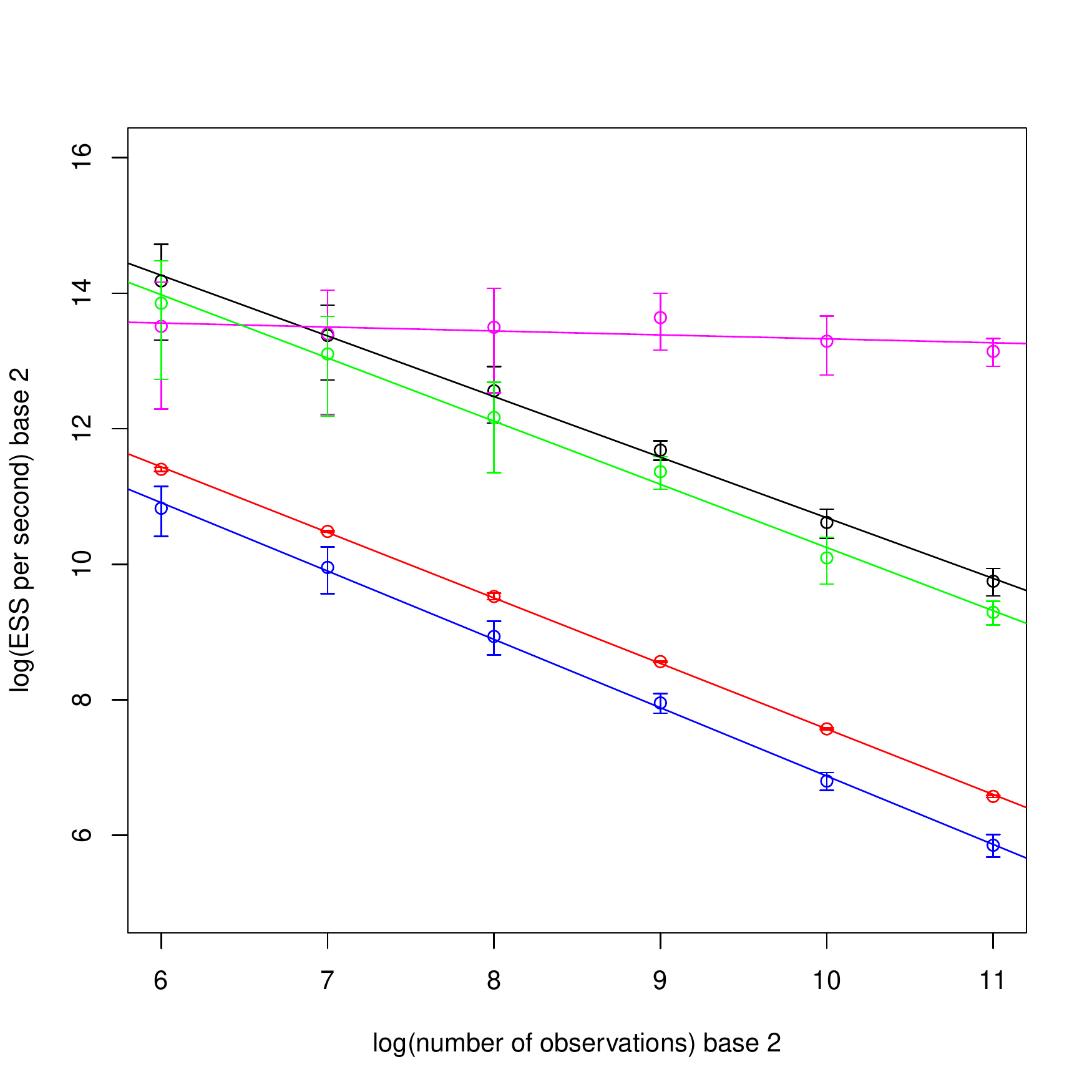}
 \caption{ESS per second, 2 dimensions}
\end{subfigure}
\begin{subfigure}[b]{0.45 \textwidth}
 \includegraphics[width = \textwidth]{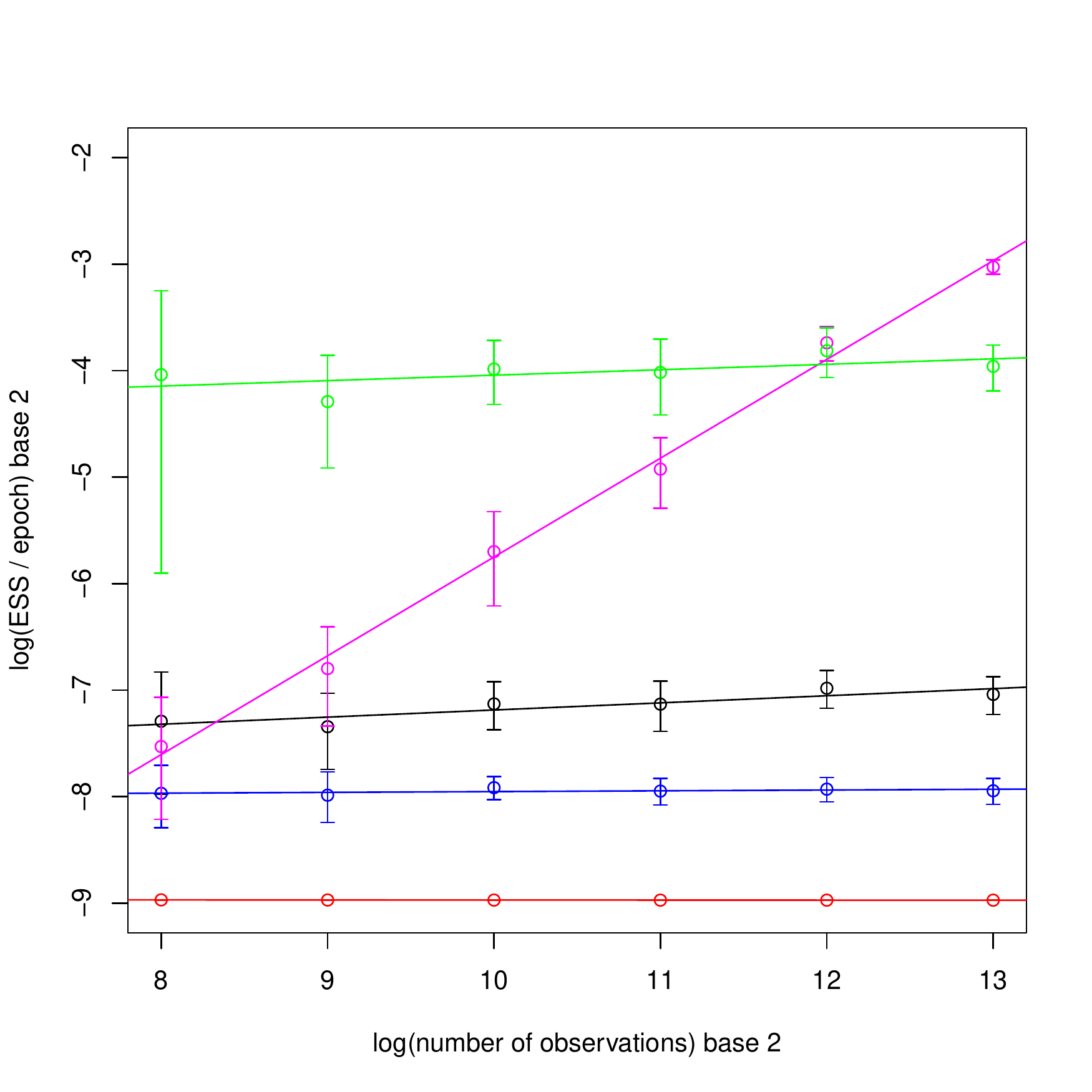}
 \caption{ESS per epoch, 16 dimensions}
\end{subfigure}
\begin{subfigure}[b]{0.45 \textwidth}
 \includegraphics[width = \textwidth]{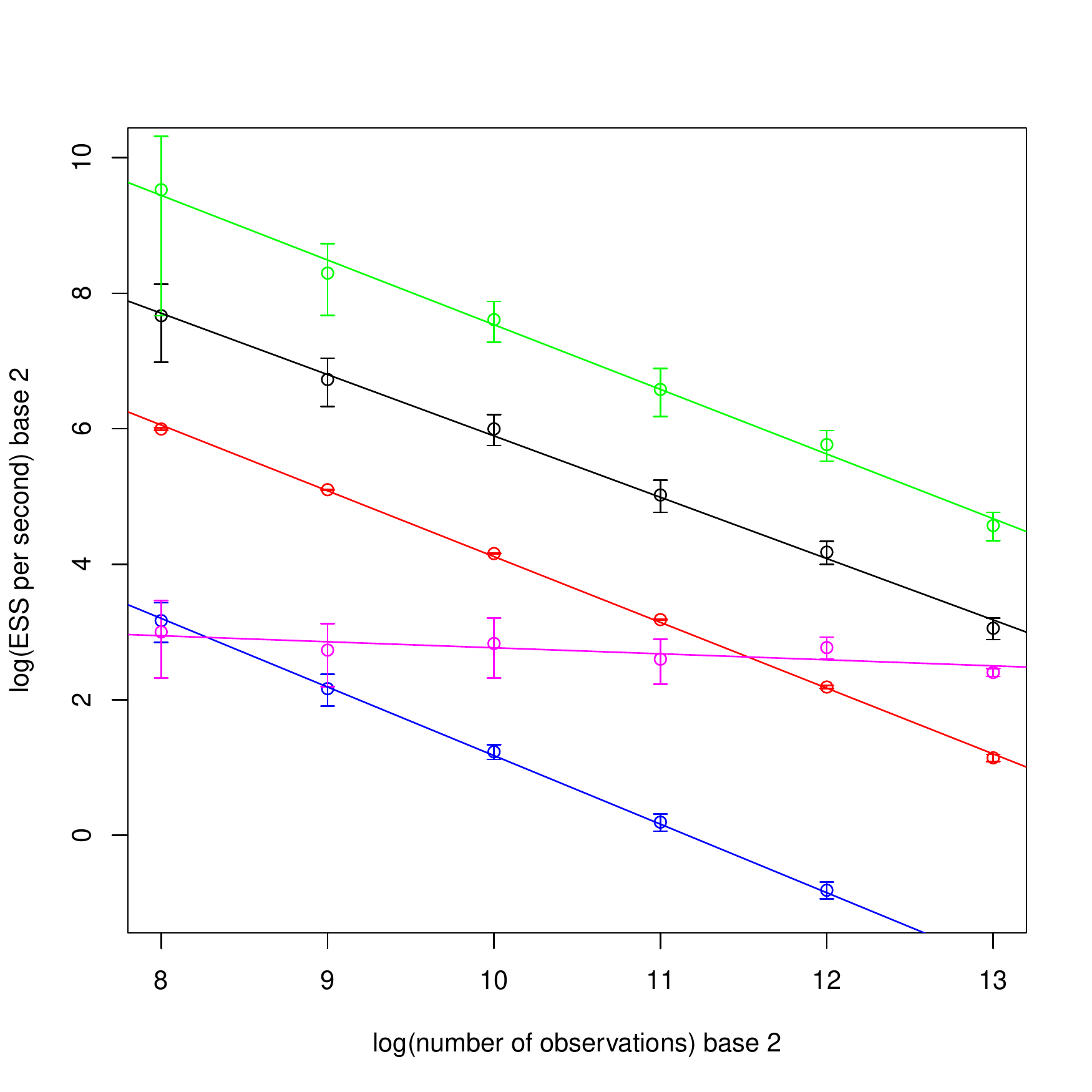}
 \caption{ESS per second, 16 dimensions}
\end{subfigure}
\caption{Log-log plots of the experimentally observed dependence of ESS per epoch (ESSpE) and ESS per second (ESSpS) with respect to the first coordinate $\Xi^1$, as a function of the number of observations $n$ in the case of (2-D and 16-D) Bayesian logistic regression (Section~\ref{sec:logistic}). 
Data is randomly generated based on true parameter values $\xi_0 = (1,2)$ (2-D) and $\xi_0 = (1, \dots, 1)$ (16-D). 
Trajectories all start in the true parameter value $\xi_0$. Plotted are mean and standard deviation over 10 experiments,
along with the best linear fit. Displayed are MALA (tuned to have optimal acceptance ratio, green), Zig-Zag with global bound (red), Zig-Zag with Lipschitz bound (black), ZZ-SS using global bound (blue) and ZZ-CV (magenta), all run for $10^5$ epochs. As reference point for ZZ-CV we compute the posterior mode numerically, the cost of which is negligible compared to the MCMC. The experiments are carried out in R with C++ implementations of all algorithms.
\label{fig:ess-per-epoch}}
\end{figure}

\subsection{A non-identifiable logistic regression example with unbounded Hessian}
\label{sec:nonidentifiable}

In a further experiment we consider one-dimensional data $(x^j, y^j)$, for $j = 1, \dots, n$, $x^j \in \R$, $y^j \in \{0,1\}$, which we assume for illustrational purposes to be generated from a logistic model where $\P(y^j = +1 \mid x^j, \xi_1, \xi_2) = \frac {1}{1 + \exp(-(\xi_1 + \xi_2^2) x^j)}$. The model is non-identifiable since two parameters $\xi$, $\eta$ correspond to the same model as long as $\xi_1 + \xi_2^2 = \eta_1 + \eta_2^2$. This leads to a sharply rigged probability density function  reminiscent of density functions concentrated along lower dimensional submanifolds which often arise in Bayesian inference problems. In this case the Hessian of the log density is unbounded so that we cannot use the standard framework for the Zig-Zag algorithms.
It is discussed in the supplementary material \cite[Section 6]{supplement}, how to obtain computational bounds for the Zig-Zag and ZZ-CV algorithms, which may serve as an illustration on how to obtain such bounds in settings beyond those described in Sections~\ref{sec:sampling-dominated-hessian} and~\ref{sec:sub-sampling-control-variates}. 

In Figure~\ref{fig:nonidentifiable} we compare trace plots for the Zig-Zag algorithms (ZZ, ZZ-CV) 
to trace plots for Stochastic Gradient Langevin Dynamics (SGLD) and the Consensus Algorithm \cite{Scott:2013}. SGLD and Consensus are seen to be strongly biased, whereas ZZ and ZZ-CV  target the correct distribution. However this comes at a cost: ZZ-CV loses much of its efficiency in this situation (due to the combination of lack of posterior contraction and unbounded Hessian); in particular it is not super-efficient. The use of multiple reference points may alleviate this problem, see also the discussion in Section~\ref{sec:discussion}.


\begin{figure}[ht!]
\begin{subfigure}[b]{0.48 \textwidth}
 \includegraphics[width=\textwidth]{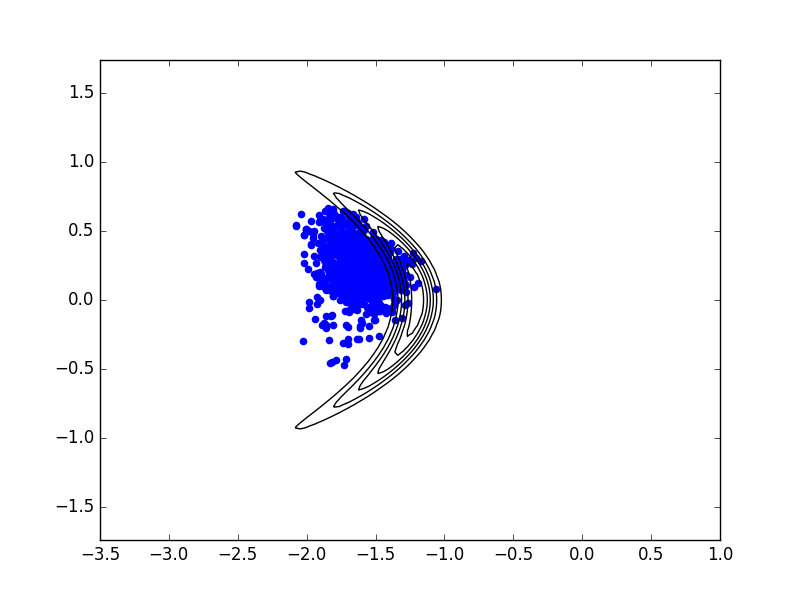}
 \caption{Consensus algorithm, 10 batches}
\end{subfigure}
\begin{subfigure}[b]{0.48 \textwidth}
 \includegraphics[width=\textwidth]{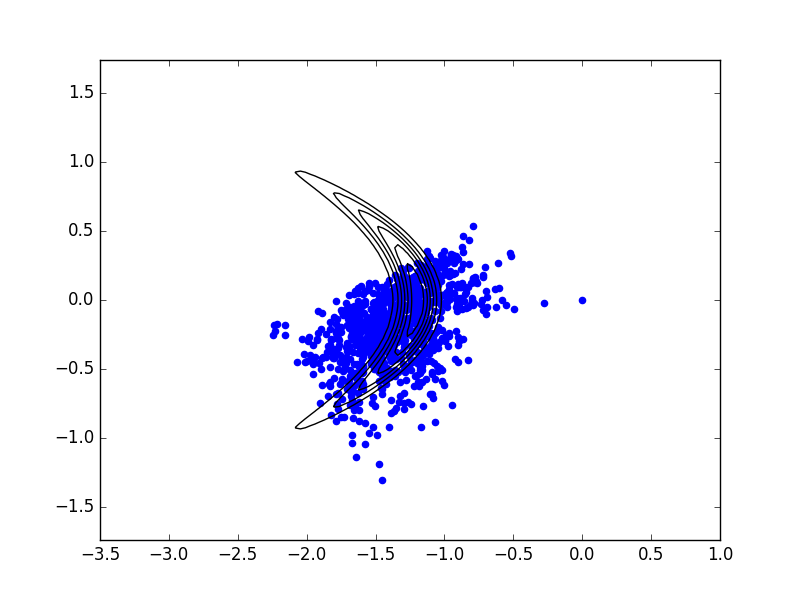}
 \caption{SGLD, 100 batches}
 \end{subfigure}\\
\begin{subfigure}[b]{0.48 \textwidth}
 \includegraphics[width=\textwidth]{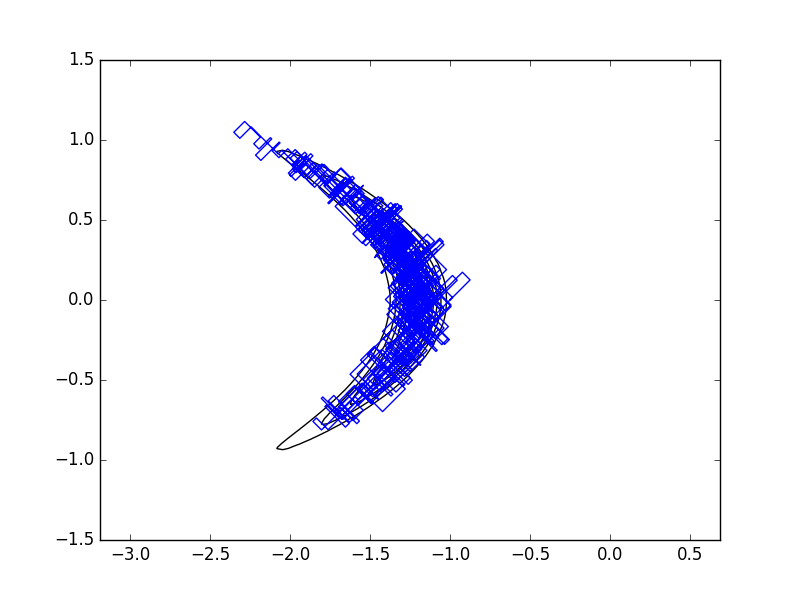}
 \caption{Zig-Zag}
\end{subfigure}
\begin{subfigure}[b]{0.48 \textwidth}
 \includegraphics[width=\textwidth]{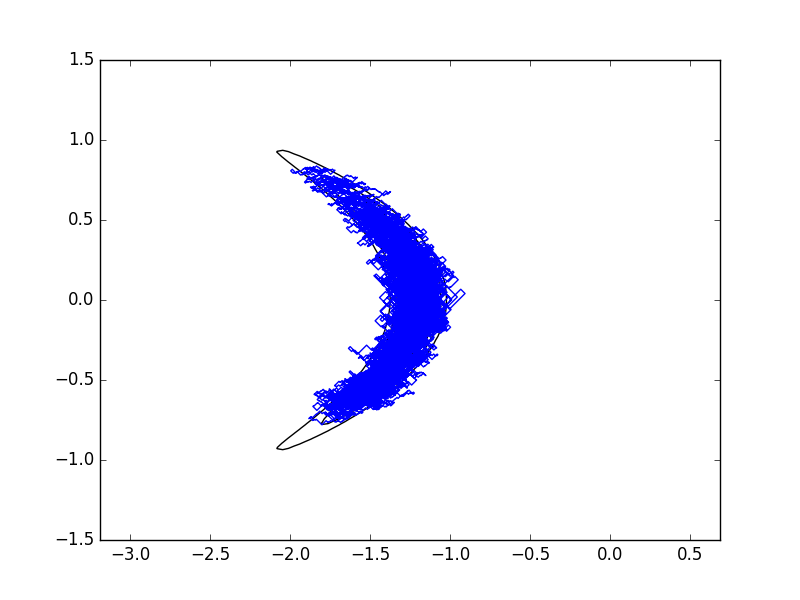}
 \caption{ZZ-CV}
\end{subfigure}
\caption{Trace plots of several algorithms (blue) and density contour plots for the non-identifiable logistic regression example of Section~\ref{sec:nonidentifiable}. In this example we have for the number of observations $n = 1,000$. Data is randomly generated from the model with true parameter satisfying $\xi_1 + \xi_2^2 = -1$. The prior is a 2-dimensional standard normal distribution. Due to the unbounded Hessian and because SGLD is not corrected by a Metropolis-Hastings accept/reject, the stepsize of SGLD needs to be set to a very small value (compared e.g. to what would be required for MALA) in order to prevent explosion of the trajectory; still the algorithm exhibits a significant asymptotic bias.}
\label{fig:nonidentifiable} 
\end{figure}

\section{Discussion}
\label{sec:discussion}

We have introduced the multi-dimensional Zig-Zag process and shown that it can be used as an alternative to standard MCMC algorithms. The advantages of the Zig-Zag process are that it 
is a non-reversible process, and thus has the potential to mix better than standard reversible MCMC algorithms, and that we can use sub-sampling ideas when simulating the process
and still be guaranteed to sample from the true target distribution of interest. We have shown that it is possible to implement sub-sampling with control-variates in a way that 
we can have super-efficient sampling from a posterior: the cost per effective sample size is sub-linear in the number of data points. We believe the latter aspect will be particularly
useful for applications where the computational cost of calculating the likelihood for a single data point is high.

As such, the Zig-Zag process holds substantial promise. However, being a completely new method, there are still substantial challenges in implementation which will need to be overcome for Zig-Zag to reach the levels of popularity of
standard discrete-time MCMC. The key challenges to implementing the Zig-Zag efficiently are
\begin{enumerate}
\item
to simulate from the relevant time-inhomogeneous Poisson process; and
\item
in order to realise the advantages of Zig-Zag for large datasets, reasonable centering points need to be found before commencing the MCMC algorithm itself.
\end{enumerate}
For the first of these challenges, we have shown how this can be achieved through bounding the rate of the Poisson
process, but the overall efficiency of the simulation algorithm then depends on how tight these bounds are.  In Subsection \ref{sec:compbd} we describe  efficient ways to carry this out. Moreover, as pointed out by a reviewer, there is a substantial literature on simulating 
stochastic processes that involve simulating such time-inhomogeneous Poisson processes \cite[]{gibson2000efficient,anderson2007modified}. Ideas from this literature could be leveraged both to extend the class of models for which we can simulate the Zig-Zag process, and also to make implementation of simulation algorithms more efficient. 

The second challenge applies when using the ZZ-CV algorithm to obtain super-efficiency for big data as discussed in Subsection \ref{sec:sub-sampling-control-variates}. Although in our experience finding appropriate centering points is rarely a serious problem, it is difficult to give a prescriptive recipe for this step.

On the face of it, these challenges may limit the practical applicability of Zig-Zag, at least in the short term. With that in mind, we have released an R/Rcpp package for logistic regression, as well as the code which reproduces the experiments of Section~\ref{sec:examples} \cite[]{zigzag-experiments}.

In addition, while Zig-Zag is an {\em exact approximate} simulation method, there are various short-cuts to speed it up at the expense of the introduction of an approximation. For instance, there are already ideas of approximately simulating the continuous-time dynamics, through approximate bounds on the Poisson rate \cite[]{pakman2016stochastic}. These ideas can lead to efficient simulation of the Zig-Zag process
for a wide class of models, albeit with the loss of exactness. Understanding the errors introduced by such an approach is an  open area. 

The most exciting aspect of the Zig-Zag process is the super-efficiency we observe when using sub-sampling with control variates. Already this idea has been adapted and shown to apply to other 
recent continuous-time MCMC algorithms \cite[]{Fearnhead:2018,pakman2016stochastic}. 
We have shown in Subsection \ref{sec:nonidentifiable} that Zig-Zag  can be applied effectively within highly non-Gaussian examples where rival approximate methods such as SGLD and the Consensus Algorithm are seriously biased. So there is no intrinsic reason to expect Zig-Zag to rely on the target distribution being close to Gaussian, although posterior contraction and the ability to find tight Poisson process rate bounds play important roles as we saw in our examples. There is much to learn about how the efficiency of  Zig-Zag depends on the statistical properties of the posterior distribution. However, unlike its approximate competitors, Zig-Zag will still remain an {\em exact approximate} method whatever the structure of the target distribution.

In truly `big data' settings, in principle we still need to process all the data once, although a suitable reference point can be determined using a subset of the data, we do need to evaluate the full gradient of the log density once at this reference point, and this computation is $O(n)$. This operation however is much easier to parallelize than MCMC is, and after this approximately independent samples can be obtained at a cost of $O(1)$ each. Thus if we wish to obtain $k$ approximately independent samples, the computational efficiency of ZZ-CV is $O(k + n)$ while the complexity of traditional MCMC algorithms is $O(k n)$. This is confirmed by the experiment in Section~\ref{sec:logistic}.


The idea for control variates we present in this paper is just one, possibly the simplest, implementation of this idea. There are natural extensions
to deal with e.g. multi-modal posteriors or situations where we do not have posterior concentration for all parameters. The simplest of these involve using multiple reference points and monitoring
the computational bound we get within the CV-ZZ algorithm and switching to a different algorithm when we stray so far from a reference point that this bound becomes too large. More sophisticated
approaches include using the ideas from \cite[]{dubey2016variance}, where we introduce a reference point for each data point and update the reference points for data within
the subsample at each iteration of the algorithm. This would lead to the estimate of the gradient that we center our control variate estimator around to depend on the recent history of the Zig-Zag process, and thus could be accurate even if we explore multiple modes or the tails of the target distribution. 

\subsection*{Acknowledgements}
The authors are grateful for helpful comments from referees, the editor and the associate editor which have improved the paper. Furthermore the authors acknowledge Matthew Moores (University of Warwick) for helpful advice on implementing the Zig-Zag algorithms  as an R package using Rcpp. All authors acknowledge the support of EPSRC under the {\em ilike} grant: EP/K014463/1.



\begin{supplement}[id=suppA]
  \sname{Supplement}
  \stitle{Supplement to ``The Zig-Zag Process and Super-Efficient
Sampling for Bayesian Analysis of Big Data''}
  \slink[doi]{COMPLETED BY THE TYPESETTER}
  \sdatatype{.pdf}
  \sdescription{Mathematics of the Zig-Zag process, scaling of SGLD, details on the experiments including how to obtain computational bounds.}
\end{supplement}

\bibliographystyle{imsart-nameyear}
\bibliography{./zigzag.bib}

\end{document}